\theoremstyle{plain}
\newtheorem{theorem}{Theorem}
\newtheorem{lemma}[theorem]{Lemma}
\newtheorem{claim}{Claim}
\newtheorem{observation}[theorem]{Observation}
\newcommand{\mypar}[1]{\smallskip\noindent\textbf{#1}}
\newcommand{\ie}{{i.e.}}
\newcommand{\ZZ}{\mathbb{Z}}
\newcommand{\slope}{\mathrm{slope}}
\newcommand{\conv}{\mathrm{conv}}
\newcommand{\proj}{\mathrm{proj}}
\newcommand{\Fn}{F}
\newcommand{\Gn}{G}
\newcommand{\FSn}{\bar{F}}
\newcommand{\GSn}{\bar{G}}
\newcommand{\later}[1]{{}}
\newcommand{\old}[1]{{}}
\long\def\ignore#1{}
\newcommand{\chain}{\text{\begin{tikzpicture}\draw[->] (0,-1ex) +(180:1.3ex) arc(180:90:1.3ex);\end{tikzpicture}}}
\newcommand{\chainTopLeft}{\chain}
\newcommand{\chainBottomLeft}{\rotatebox[origin=c]{90}{\chain}}
\newcommand{\chainBottomRight}{\rotatebox[origin=c]{180}{\chain}}
\newcommand{\chainTopRight}{\rotatebox[origin=c]{270}{\chain}~\!}
\newcommand{\capTop}{\text{\ensuremath{\curvearrowright}}}
\newcommand{\capLeft}{\rotatebox[origin=c]{90}{\capTop}}
\newcommand{\capBottom}{\rotatebox[origin=c]{180}{\capTop}}
\newcommand{\capRight}{\rotatebox[origin=c]{270}{\capTop}}
\newcommand{\Cross}{\prod}
\title{Convex Polygons in Cartesian Products\thanks{This work was initiated at the 2017 Fields Workshop on Discrete and Computational Geometry (Carleton University, Ottawa, ON, July 31--August 4, 2017).}}
\author{%
Jean-Lou De Carufel\,%
\thanks{School of Electrical Engineering and Computer Science, University of Ottawa, Canada, 
\texttt{jdecaruf@uottawa.ca}. This work has been supported by NSERC.}
\and
Adrian Dumitrescu\,%
\thanks{Algoresearch L.L.C., Milwaukee, WI, USA, 
\texttt{ad.dumitrescu@gmail.com}}
\and
Wouter Meulemans\,%
\thanks{Department of Mathematics and Computer Science, TU Eindhoven, The Netherlands, 
\texttt{w.meulemans@tue.nl}}
\and 
Tim Ophelders\,\footnotemark[4]
\thanks{Department of Information and Computing Sciences, Utrecht University, The Netherlands, 
\texttt{t.a.e.ophelders@uu.nl}}
\and
Claire Pennarun\,%
\thanks{LIRMM, CNRS \& Universit\'e de Montpellier, France,
\texttt{claire.pennarun@gmail.com}}
\and
Csaba D.\ T\'oth\,%
\thanks{Department of Mathematics, California State University Northridge, Los Angeles, CA; and Department of Computer Science, Tufts University, Medford, MA, USA,
\texttt{csaba.toth@csun.edu}}
\and
and Sander Verdonschot\,%
\thanks{School of Computer Science, Carleton University, Canada, \texttt{sander@cg.scs.carleton.ca}}
}
\date{}
\begin{document}

\maketitle

\begin{abstract}
We study several problems concerning convex polygons whose vertices lie in a Cartesian product of two sets of $n$ real numbers (for short, \emph{grid}). First, we prove that every such grid contains $\Omega(\log n)$ points in convex position and that this bound is tight up to a constant factor. We generalize this result to $d$ dimensions (for a fixed $d\in \mathbb{N}$), and obtain a tight lower bound of $\Omega(\log^{d-1}n)$ for the maximum number of points in convex position in a $d$-dimensional grid.
Second, we present polynomial-time algorithms for computing the longest $x$- or $y$-monotone convex polygonal chain in a grid that contains no two points with the same $x$- or $y$-coordinate. We show that the maximum size of a convex polygon with such unique coordinates can be efficiently approximated up to a factor of $2$. Finally, we present exponential bounds on the maximum number of point sets in convex position in such grids, and for some restricted variants. These bounds are tight up to polynomial factors.
\end{abstract}

\section{Introduction}
\label{sec:intro}

Can a convex polygon $P$ in the plane be reconstructed from the projections of its vertices to the coordinate axes?
Assuming that no two vertices of $P$ share the same $x$- or $y$-coordinate, we arrive at the following problem: given two sets, $X$ and $Y$, each containing $n$ real numbers, does the Cartesian product $X \times Y$ support a convex polygon with $n$ vertices?
We say that $X\times Y$ \emph{contains} a polygon $P$ if every vertex of $P$ is in $X\times Y$; and $X\times Y$ \emph{supports} $P$ if it contains $P$ and no two vertices of $P$ share an $x$- or $y$-coordinate. For short, we call the Cartesian product $X\times Y$ an \emph{$n\times n$ grid}.

Not every $n \times n$ grid supports a convex $n$-gon. This is the case already for $n = 5$ (see Figure~\ref{fig:smallN}). Several interesting questions arise: can we decide efficiently whether an $n\times n$-grid supports a convex $n$-gon? How can we find the largest $k$ such that it contains (resp., supports) a convex $k$-gon? What is the largest $k$ such that \emph{every} $n \times n$ grid supports a convex $k$-gon? How many convex polygons does an $n\times n$ grid contain, or support? We initiate the study of these questions for convex polygons, and their higher dimensional variants for convex polyhedra.

\mypar{Our results.}
We first show that every $n \times n$ grid contains (resp., supports) a convex polygon with $(1-o(1))\log n$ vertices\footnote{All logarithms in this paper are of base 2.}; this bound is tight up to a constant factor: there are $n \times n$ grids that do not contain convex polygons with more than $4(\lceil \log n \rceil +1)$ vertices. We generalize our upper and lower bounds to higher dimensions, and show that every $d$-dimensional Cartesian product $\Cross_{i=1}^d X_i$, where $|X_i|=n$ and $d$ is constant, contains $\Omega(\log^{d-1}n)$ points in convex position; this bound is also tight apart from constant factors (Section~\ref{sec:bounds}).
Next, we present polynomial-time algorithms to find a maximum supported convex polygon that is $x$- or $y$-monotone.
We show how to efficiently approximate the maximum size of a supported convex polygon up to a factor of two (Section~\ref{sec:algorithms}). Finally, we present tight asymptotic bounds for the maximum number of convex polygons supported by an $n \times n$ grid (Section~\ref{sec:counting}). We conclude with open problems (Section~\ref{sec:con}).

\mypar{Related work.}
Erd\H{o}s and Szekeres proved, as one of the first Ramsey-type results in combinatorial geometry~\cite{Chapter11}, that for every $k\in \mathbb{N}$, a sufficiently large point set in the plane in general position contains $k$ points in convex position. The minimum cardinality of a point set that guarantees $k$ points in convex position is known as the Erd\H{o}s--Szekeres number, $f(k)$. They proved that $2^{k-2}+1\leq f(k)\leq \binom{2k-4}{k-2}+1=4^{k(1-o(1))}$, and conjectured that the lower bound is tight~\cite{erdos1960}.
The current best upper bound, due to Suk~\cite{Suk1017}, is $f(k)\leq 2^{k(1+o(1))}$.
In other words, every set of $n$ points in general position in the plane contains $(1-o(1))\log{n}$ points in convex position, and this bound is tight up to lower-order terms.

In dimension $d\geq 3$, the asymptotic growth rate of the Erd\H{o}s--Szekeres number is not known. By the Erd\H{o}s--Szekeres theorem, every set of $n$ points in general position in $\mathbb{R}^d$ contains $\Omega(\log n)$ points in convex position (it is enough to find points whose projections onto a generic plane are in convex position). For every constant $d\geq 2$, K{\'a}rolyi and Valtr~\cite{KV03} and Valtr~\cite{Valtr92} constructed  $n$-element sets in general position in $\mathbb{R}^d$ in which no more than $O(\log^{d-1}n)$ points are in convex position. Both constructions are recursive, and one of them is related to high-dimensional Horton sets~\cite{Valtr92}. These bounds are conjectured to be optimal apart from constant factors. Our results establish the same $O(\log^{d-1}n)$ upper bound for Cartesian products, for which it is tight apart from constant factors. However, our results do not improve the bounds for points in general position.

Algorithmically, one can find a largest convex cap in a given set of $n$ points in $\mathbb{R}^2$ in $O(n^2\log n)$ time by dynamic programming~\cite{EdelsbrunnerG89}, and a largest subset in convex position in $O(n^3)$ time~\cite{chvatal1980finding,EdelsbrunnerG89}.
The same approach can be used for counting the number of convex polygons contained in a given point set~\cite{MRSW95}. While this approach applies to grids, it is unclear how to include the restriction that each coordinate is used at most once. On the negative side, finding a largest subset in convex position in a point set in $\mathbb{R}^d$ for dimensions $d\geq 3$ was recently shown to be  NP-hard~\cite{giannopoulos2013computational}.

There has been significant interest in \emph{counting} the number of convex polygons in various point sets.
Answering a question of Hammer, Erd\H{o}s~\cite{erdos1978} proved that every set of $n$ points in general position in $\mathbb{R}^2$ contains $\exp(\Theta(\log^2 n))$ subsets in convex position, and this bound is the best possible.
B\'ar\'any and Pach~\cite{barany1992number1}
showed that the number of convex polygons in an $n \times n$ section of the integer lattice is $\exp\left( O(n^{1/3}) \right)$. B\'ar\'any and Vershik~\cite{barany1992number2} generalized this bound to $d$-dimensions and showed that there are  $\exp\left( O(n^{(d-1)/(d+1)}) \right)$ convex polytopes in an $n \times \cdots \times n$ section of $\ZZ^d$. Note that the exponent is sublinear in $n$ for every $d\geq 2$. We prove that an $n \times n$ Cartesian product can contain $\exp(\Theta(n))$ convex polygons, significantly more than integer grids, and our bounds are tight up to polynomial factors.

Motivated by integer programming and geometric number theory, lattice polytopes (whose vertices are in $\mathbb{Z}^d$) have been intensely studied; refer to~\cite{Bar08,Chapter7}. However, results for lattices do not extend to arbitrary Cartesian products. Recently, several deep results have been established for Cartesian products in incidence geometry and additive combinatorics~\cite{RSS17,RSS16,RSZ16,SSW13}, while the analogous statements
for point sets in general position remain elusive.

\mypar{Definitions.}
A polygon $P$ in $\mathbb{R}^2$ is \emph{convex} if all of its internal angles are strictly smaller than $\pi$.
A point set in $\mathbb{R}^2$ is in \emph{convex position} if it is the vertex set of a convex polygon; and it is in \emph{general position} if no three points are collinear. Similarly, a polyhedron $P$ in $\mathbb{R}^d$ is \emph{convex} if it is the convex hull of a finite set of points. A point set in $\mathbb{R}^d$ is in \emph{convex position} if it is the vertex set of a convex polyhedron; and it is in \emph{general position} if no $d+1$ points lie on a hyperplane. The \emph{1-skeleton} of a polyhedron is the graph formed by its vertices and edges. In $\mathbb{R}^d$, we say that the $x_d$-axis is \emph{vertical}, hyperplanes orthogonal to the $x_d$-axis are \emph{horizontal}, and  the above-below relationship is understood with respect to $x_d$ coordinates.

We consider special types of convex polygons in the plane. Let $P$ be a convex polygon with vertices $((x_1,y_1),\ldots,(x_k,y_k))$ in clockwise order. We say that $P$ is a \emph{convex cap} if the $x$- or $y$-coordinates are strictly monotone, and a \emph{convex chain} if both the $x$- and $y$-coordinates are strictly monotonic. We distinguish four types of convex caps (resp., chains) based on the monotonicity
of the coordinates as follows:

\begin{itemize}
	\item \emph{convex caps} come in four types $\{\capTop,\capLeft,\capBottom,\capRight\}$. We have\\
	\begin{tabular}{ll}
		$P\in\capTop$    & if and only if~$(x_i)_{i=1}^k$ strictly increases;\\
		$P\in\capLeft$   & if and only if~$(y_i)_{i=1}^k$ strictly increases;\\
		$P\in\capBottom$ & if and only if~$(x_i)_{i=1}^k$ strictly decreases;\\
		$P\in\capRight$  & if and only if~$(y_i)_{i=1}^k$ strictly decreases;
	\end{tabular}
	\item \emph{convex chains} come in four types~$\{\chainTopLeft,\chainTopRight,\chainBottomRight,\chainBottomLeft\}$. We have\\
	\begin{tabular}{llll}
	$\chainTopLeft=\capLeft\cap\capTop$, & $\chainTopRight=\capTop\cap\capRight$, & $\chainBottomLeft=\capBottom\cap\capLeft$, & $\chainBottomRight=\capRight\cap\capBottom$.
	\end{tabular}
\end{itemize}

\mypar{Initial observations.}
It is easy to see that for $n\in \{3,4\}$, every $n\times n$ grid supports a convex $n$-gon. However, there exists a $5\times 5$ grid that does not support any convex pentagon (cf.~Figure~\ref{fig:smallN}).
Interestingly, every $6\times 6$ grid supports a convex pentagon.

\begin{figure}[htb]
	\centering
	\includegraphics{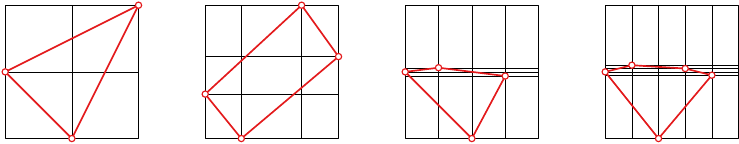}%
	\caption{Maximum-size supported convex polygons of respective sizes~$3$, $4$, $4$, and~$5$ in $n\times n$ grids, where $n$ is between~$3$ and~$6$.}%
	\label{fig:smallN}
\end{figure}

\begin{lemma}
	Every $6 \times 6$ grid $X \times Y$ supports a convex polygon of size at least~$5$.%
	\label{lem:lowerSix}
\end{lemma}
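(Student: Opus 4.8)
The plan is to first cut down the number of configurations by symmetry, then to locate a convex quadrilateral whose four vertices already lie in distinct rows and columns, and finally to enlarge it to a convex pentagon using the two rows and two columns left unused. Write $X=\{x_1<\dots<x_6\}$ and $Y=\{y_1<\dots<y_6\}$. I would exploit the eight symmetries of the grid (reversing the order of $X$, reversing the order of $Y$, and transposing $X\leftrightarrow Y$), each of which preserves both convex position and the supporting condition. I may also assume general position: if three grid points with distinct rows and columns happen to be collinear, an arbitrarily small perturbation of the $x_i$ and $y_j$ removes the collinearity without destroying any supported convex polygon, so it suffices to treat the generic case.

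First I would produce a convex quadrilateral with distinct coordinates from the diagonal points $d_i=(x_i,y_i)$, which are strictly increasing in both coordinates and hence pairwise distinct in each coordinate. By the case $k=4$ of the Erd\H{o}s--Szekeres theorem (every five points in general position contain four in convex position), the six $d_i$ contain four points forming a convex quadrilateral $Q$. Since each $d_i$ occupies column $i$ and row $i$ simultaneously, the quadrilateral $Q$ uses four columns and the four rows with the \emph{same} indices, leaving a single pair of indices $\{a,b\}$ whose columns and rows are both free. To keep all five coordinates distinct, a fifth vertex $p$ must therefore lie in a free column and a free row, i.e.\ $p\in\{(x_a,y_a),(x_b,y_b),(x_a,y_b),(x_b,y_a)\}$: the two discarded diagonal points together with their two off-diagonal companions.

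The extension criterion is that $p$ completes $Q$ to a convex pentagon precisely when $p$ lies outside $Q$ and beyond exactly one edge-line of $Q$, so that $p$ adds itself as a new hull vertex without swallowing any vertex of $Q$. I would argue that at least one of the four candidates satisfies this. The off-diagonal candidates are especially useful: assuming $a<b$, the point $(x_a,y_b)$ sits up-left of the diagonal and $(x_b,y_a)$ sits down-right of it, so one of them tends to fall outside $Q$ and beyond a single edge.

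The main obstacle is the configuration in which both free indices are central, for example $\{a,b\}=\{3,4\}$ with $Q=d_1d_2d_5d_6$ a large convex quadrilateral having $d_3,d_4$ in its interior: then all four candidate points are pulled inside $Q$ and none of them extends it. Resolving this is the crux of the argument and is exactly where having six rather than five rows and columns supplies the needed slack. I expect to handle it by \emph{not} insisting that the base quadrilateral be four diagonal points: when the free indices are central, I would instead build a convex quadrilateral from a mix of diagonal and off-diagonal grid points so that the two freed lines become extreme, and then apply the extension criterion. Reducing by the eight symmetries, this leaves a finite case analysis over the position of the free rows and columns relative to $Q$; verifying that every case admits a valid fifth vertex is the step I expect to require the most care.
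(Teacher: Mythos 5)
Your opening reduction to general position is a genuine gap, and it runs in the wrong direction. A small perturbation of $X$ and $Y$ does preserve every supported convex polygon of the \emph{original} grid (strict convexity and distinctness of coordinates are open conditions), so the perturbed grid supports at least as much as the original one. But you need the converse: you would prove the lemma for the perturbed grid and then must transfer the pentagon back, and the five corresponding points of the original grid may contain a collinear triple, i.e., be only weakly convex. ``Supports a convex pentagon'' is an open condition, not a closed one, so establishing it for generic grids says nothing about the degenerate grids. Moreover, for your argument this is not a removable technicality: in the grid $X=Y=\{1,\dots,6\}$ all six diagonal points $d_i=(x_i,y_i)$ lie on a single line, so the Erd\H{o}s--Szekeres step applied to the diagonal produces no convex quadrilateral at all --- your first step collapses precisely on the grids you perturbed away.

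The second gap you name yourself but do not close: when $Q=d_1d_2d_5d_6$ and the free indices are $\{3,4\}$, all four candidate fifth vertices $(x_3,y_3),(x_4,y_4),(x_3,y_4),(x_4,y_3)$ can lie inside $Q$, so the extension criterion fails; you propose switching to a quadrilateral that mixes diagonal and off-diagonal points, followed by a finite case analysis, but that is exactly the crux of the lemma and it is not carried out. As written, the argument is a plan with its hardest case left open. For contrast, the paper's proof avoids both problems with a direct construction: delete the extreme rows and columns, take a convex chain of size $3$ between two opposite corners of the inner $4\times 4$ grid $X'\times Y'$ (such a chain exists with no general-position hypothesis, since the line through two opposite corners can pass through at most two of the four central grid points), and then add the two points $(x',\min(Y))$ and $(\max(X),y')$, where $x'\in X'$ and $y'\in Y'$ are the inner coordinates unused by the chain. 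Because these two added vertices use the extreme coordinates $\min(Y)$ and $\max(X)$, every turn at or adjacent to them is forced to be a right turn by a pure sign check, so the five points are automatically in convex position with all rows and columns distinct.
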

\begin{proof}
    Refer to Figure~\ref{fig:lem-lowerSix} for illustration. Let $x_1, \ldots, x_6$ and $y_1, \ldots, y_6$ denote the vertices of $X$ and $Y$ in increasing order.
	Consider first the $4 \times 4$ subgrid obtained by omitting the extremal values, that is, $\{ x_2, \ldots, x_5 \} \times \{ y_2, \ldots, y_5 \}$.
	As the $2\times 2$ grid $\{x_3,x_4\}\times \{y_3,y_4\}$ consists of four corners of a rectangle, at least two of these points are not collinear with $(x_2,y_2)$ and $(x_5,y_5)$; let $(x',y')$ denote such a point and $x''\in \{x_3,x_4\}$ and $y''\in\{y_3,y_4\}$ denote the unused coordinates. Now, $P' = \{(x_2,y_2),(x',y'),(x_5,y_5)\}$ is a supported convex chain of the subgrid and thus of $X \times Y$. By construction, $P'$ is either in $\chainTopLeft$ or $\chainBottomRight$; without loss of generality, assume the former.
	
\begin{figure}[htb]
\centering
\includegraphics{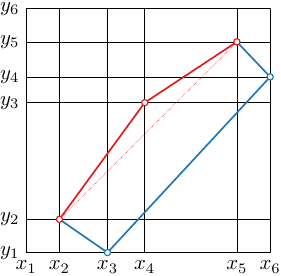}
\caption{A $6\times 6$ grid, in which the convex chain $P'$ (red) has size 3, and it is extended to a convex pentagon $P$ (red and blue) supported by the grid.}
\label{fig:lem-lowerSix}
\end{figure}

	We now extend $P'$ to a convex polygon $P$ of size~$5$ by appending two vertices: $(x_6,y'')$ and~$(x'',y_1)$.
	By construction, all five vertices of $P$ have distinct coordinates.
	Moreover, the points in $P$ are in convex position, as each is an extreme point in $P$:
	$(x_2,y_2)$ has the lowest $x$-coordinate; $(x_5,y_5)$ has the highest $y$-coordinate; $(x_6,y'')$ has the maximum $x$-coordinate; and $(x'',y_1)$ has the minimum $y$-coordinate. Finally, $(x',y')$ is an extreme point in $P'$, hence $P'$ has a supporting halfplane $h$ whose boundary contains $(x',y')$. Since $P'\subset h$, and we have $x_2<x''<x_4$ and $y_2<y''<y_4$, the halfplane $h$ contains $(x_6,y'')$ and $(x'',y_1)$, as well, $(x',y')$ is an extreme point in $P$. Thus, we conclude that $P$ is a supported convex polygon of size $5$ on $X \times Y$.
\end{proof}

Note that, although every pentagon supported by a grid lies on a $5 \times 5$ subgrid, the proof above does not work on a $5 \times 5$ grid. If we start with a $5\times 5$ grid, and choose $P'$ from an off-center $4\times 4$ subgrid, then we might not add be able to add two points from the remaining $2\times 2$ subgrid. Specifically, the coordinates of the $4\times t$ grid may force $P'$ to be of a specific type: the example in Figure~\ref{fig:lem-lowerSix} prevents $P'$ from being in $\chainBottomRight$ as all four central points are above the dotted line. Starting with a $6\times6 $ grid, we know there are two extra points on the outer boundary that can be added to $P'$, regardless of the type of the chain $P'$.

\section{Extremal bounds for convex polytopes in Cartesian products}
\label{sec:bounds}

\subsection{Lower bounds in the plane}
\label{ssec:lowerbound2D}
In this section, we show that for every $n\geq 3$, every $n\times n$ grid supports a convex polygon with $\Omega(\log n)$ vertices. The results on the Erd\H{o}s--Szekeres number cannot be used directly, since they crucially use the assumption that the given set of points is in general position. An $n \times n$ section of the integer lattice is known to contain $\Theta(n)$ points in general position~\cite{erdos1951}; the maximum number of points in general position is conjectured to be $\frac{\pi}{\sqrt{3}}n(1+o(1))$~\cite{GK68,Pegg05}.
However, this result does not apply to arbitrary Cartesian products.
It is worth noting that higher dimensional variants
for the integer lattice are poorly understood: it is known that an $n\times n\times n$ section of $\mathbb{Z}^3$ contains $\Theta(n^2)$ points no three of which are collinear~\cite{PW2007}, but no similar statements are known in
higher dimensions. We use a recent result from incidence geometry.

\begin{lemma}[Payne and Wood~\cite{payne2013general}]
\label{lem:GenPos}
Every set of $N$ points in the plane with at most~$\ell$ collinear, where $\ell\leq O(\sqrt{N})$, contains a set of~$\Omega(\sqrt{N/ \log \ell})$ points in general position.
\end{lemma}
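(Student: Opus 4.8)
The plan is to recast the problem as finding a large independent set in the $3$-uniform hypergraph $H$ whose vertices are the $N$ points and whose hyperedges are the collinear triples: a subset is in general position exactly when it is independent in $H$. I would then bound the number of hyperedges of $H$ and extract an independent set by the probabilistic deletion method.

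The crucial step is bounding the number $m$ of collinear triples. The naive estimate uses only that each pair lies on a line with at most $\ell$ points, giving $m \le \tfrac{\ell-2}{3}\binom{N}{2} = O(\ell N^2)$, which is far too weak. Instead I would exploit that rich lines are scarce: by the Szemer\'edi--Trotter incidence theorem, the number $L_{\ge k}$ of lines containing at least $k$ of the points satisfies $L_{\ge k} = O(N^2/k^3 + N/k)$. Writing the triple count as $\sum_L \binom{|L|}{3} = \Theta\!\big(\sum_{k\ge 3} k^2\, L_{\ge k}\big)$ by summation by parts, and substituting the incidence bound with $k$ ranging only up to $\ell$ (since no line has more than $\ell$ points), I would obtain
\[
m = O\!\left(\sum_{k=3}^{\ell}\Big(\tfrac{N^2}{k} + Nk\Big)\right) = O\big(N^2 \log \ell + N\ell^2\big).
\]
Here the hypothesis $\ell = O(\sqrt N)$ forces $N\ell^2 = O(N^2)$ to be absorbed, leaving $m = O(N^2 \log\ell)$.

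To finish, I would sample each point independently with probability $p$: the expected number of surviving points is $pN$, and the expected number of surviving collinear triples is $p^3 m$. Deleting one point from each surviving triple yields a set in general position of expected size at least $pN - p^3 m$, and choosing $p = \Theta(\sqrt{N/m})$ gives a general-position subset of size
\[
\Omega\!\big(N^{3/2}/\sqrt{m}\big) = \Omega\!\big(N^{3/2}/\sqrt{N^2\log\ell}\big) = \Omega\!\big(\sqrt{N/\log\ell}\big),
\]
as required. (Equivalently, one could invoke the hypergraph Tur\'an bound for the independence number $\alpha(H)$ in terms of its average degree $3m/N$.)

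The main obstacle is precisely the triple count: the entire improvement from the elementary $\Omega(\sqrt{N/\ell})$ to $\Omega(\sqrt{N/\log\ell})$ rests on replacing the combinatorial bound $m = O(\ell N^2)$ by the incidence-theoretic bound $m = O(N^2\log\ell)$, which is where the geometry (scarcity of rich lines) enters and where the constraint $\ell = O(\sqrt N)$ is used. The sampling-and-deletion step is then entirely routine once $m$ is under control.
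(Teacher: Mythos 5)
The paper gives no proof of this lemma at all: it is quoted as a black-box result of Payne and Wood~\cite{payne2013general}, so there is nothing internal to compare against. Your argument is correct and is essentially the proof in that cited source: bound the number of collinear triples by $O(N^2\log\ell)$ via the Szemer\'edi--Trotter rich-line bound (summing $k^2 L_{\ge k}$ up to $k=\ell$ and absorbing the $N\ell^2$ term using $\ell=O(\sqrt N)$), then apply the deletion method (equivalently, the Tur\'an-type independence bound for $3$-uniform hypergraphs) to obtain a general-position subset of size $\Omega\bigl(N^{3/2}/\sqrt{m}\bigr)=\Omega\bigl(\sqrt{N/\log\ell}\bigr)$.
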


\begin{lemma}
Every $n \times n$ grid supports a convex polygon of size $(1-o(1))\log n$.
\label{lem:lowerGeneral}
\end{lemma}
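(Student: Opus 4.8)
The plan is to combine the two external results at hand: the Payne--Wood bound (Lemma~\ref{lem:GenPos}) to extract a large subset in general position, and the Erd\H{o}s--Szekeres theorem in Suk's sharp form $f(k)\le 2^{k(1+o(1))}$ to extract a convex polygon from it. The grid $X\times Y$ has $N=n^2$ points, so I first bound the maximum number $\ell$ of collinear points. A horizontal or vertical line meets $X\times Y$ in at most $n$ points, and any line of nonzero finite slope is the graph of a function of $x$, so it also meets the product in at most $\min(|X|,|Y|)=n$ points. Hence $\ell\le n=\sqrt N$, which satisfies the hypothesis $\ell\le O(\sqrt N)$ of Lemma~\ref{lem:GenPos}.

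Applying Lemma~\ref{lem:GenPos} then yields a subset $S\subseteq X\times Y$ in general position with $|S|=\Omega(\sqrt{N/\log\ell})=\Omega(n/\sqrt{\log n})$. The next step is to pass from general position to \emph{supported} position, i.e.\ to a subset with pairwise distinct $x$- and distinct $y$-coordinates. Because $S$ is in general position, each vertical line and each horizontal line contains at most two points of $S$; thus the ``conflict graph'' on $S$, joining two points when they share a coordinate, has $x$-degree and $y$-degree at most one at every vertex, hence maximum degree at most two, and is therefore a disjoint union of paths and cycles. Such a graph on $m$ vertices has an independent set of size at least $m/3$, so I can select $S'\subseteq S$ with $|S'|\ge |S|/3=\Omega(n/\sqrt{\log n})$ whose coordinates are all distinct. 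As a subset of $S$, the set $S'$ is still in general position.

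Finally I would apply the Erd\H{o}s--Szekeres theorem to $S'$: since $f(k)\le 2^{k(1+o(1))}$, any general-position set of size $m'$ contains $(1-o(1))\log m'$ points in convex position. These points lie in $X\times Y$ and, being a subset of $S'$, have pairwise distinct coordinates, so they are the vertices of a \emph{supported} convex polygon. It then remains to chase the asymptotics: with $m'=|S'|=\Omega(n/\sqrt{\log n})$ one has
\[
\log m' = \log n - \tfrac12\log\log n - O(1) = (1-o(1))\log n,
\]
and therefore a supported convex polygon of size $(1-o(1))\log m'=(1-o(1))\log n$, as claimed.

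The step I expect to require the most care is not any single inequality but the bookkeeping of the $(1-o(1))$ factors: I must verify that the constant-factor loss in the general-position extraction and the further constant-factor loss in the conflict-graph pruning only perturb $\log n$ by the lower-order term $\tfrac12\log\log n$, so that the leading constant $1$ in front of $\log n$ survives. The collinearity estimate $\ell\le\sqrt N$ is also delicate, as it sits exactly at the threshold $\ell\le O(\sqrt N)$ permitted by Lemma~\ref{lem:GenPos}; it is worth confirming carefully that no line through a Cartesian product of two $n$-sets can pass through more than $n$ of its points.
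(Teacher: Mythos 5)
Your proposal is correct and follows essentially the same route as the paper: apply the Payne--Wood lemma with $N=n^2$, $\ell=n$ to extract $\Omega(n/\sqrt{\log n})$ points in general position, prune duplicate coordinates at a constant-factor loss (the paper loses a factor $\tfrac14$; your conflict-graph argument loses $\tfrac13$ --- either suffices), and finish with Suk's bound, noting that the constant-factor losses only affect lower-order terms. Your explicit justification that no line meets $X\times Y$ in more than $n$ points, and that the pruned set stays in general position, are details the paper leaves implicit.
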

\begin{proof}
Every $n\times n$ grid contains a set of~$\Omega(\sqrt{n^2 / \log n})=\Omega(n/\sqrt{\log n})$
points in general position by applying Lemma~\ref{lem:GenPos} with $N=n^2$ and $\ell=n$.
Discarding points with the same $x$- or $y$-coordinate reduces the size by a factor at most $\frac{1}{4}$, so this asymptotic bound also holds when coordinates in $X$ and $Y$ are used at most once. By Suk's result~\cite{Suk1017},
this set of $\Omega(n/\sqrt{\log n})$ points contains a convex polygon with at least
$(1-o(1))\big(\log(n / \sqrt{\log n})\big) = (1-o(1))\log n$ vertices,
which have distinct $x$- and $y$-coordinates by construction, as required.
\end{proof}

\subsection{Upper bounds in the plane}
\label{ssec:upperbound2D}

For the upper bound, we construct $n\times n$ Cartesian products that do not support large convex chains. For~$n=8$, such a grid is depicted in Figure~\ref{fig:upperbound}.

\begin{figure}[htbp]\centering
	\includegraphics{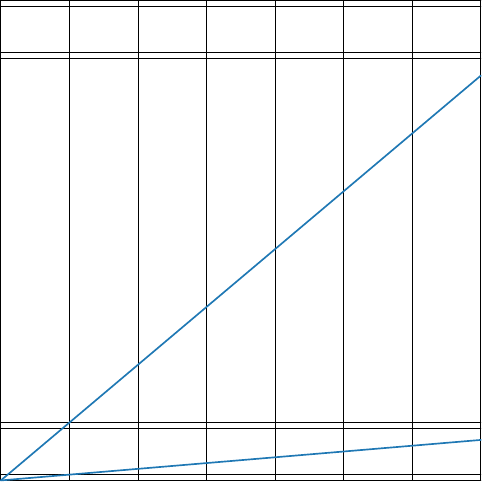}%
	\caption{An $8 \times 8$ grid without convex chains of size greater than~$4=\log 8+1$, where $X=\{0,1,\ldots,7\}$ and $Y=\{0,1,16,17,256, 257,272,273\}$. Two lines through pairs of grid points are drawn in blue.}%
	\label{fig:upperbound}
\end{figure}

\begin{lemma}
  For every $n\in \mathbb{N}$, there exists an $n \times n$ grid that
  contains at most~$\lceil \log n \rceil +1$ points in a convex chain;
  consequently, at most~$4(\lceil \log n \rceil +1)$ points in convex position.
\label{lem:upper2}
\end{lemma}
\begin{proof}
    Let~$g(n)$ be the maximum integer such that for all $n$-element sets $X,Y\subset \mathbb{R}$,
    the grid~$X\times Y$ supports a convex polygon of size~$g(n)$; clearly~$g(n)$ is nondecreasing. Let~$k$ be the minimum integer such that~$n \leq 2^k$; thus $\lceil \log n \rceil \leq k$ and $g(n) \leq g(2^k)$. We show that~$g(2^k) \leq 4(k+1)$ and thereby establish that~$g(n) \leq 4(k+1)$.

    Assume, without loss of generality, that~$n=2^k$, and let~$X=\{0,\dots,n-1\}$.
    For a~$k$-bit integer~$m$, let~$m_i$ be the bit at its~$i$-th position, such that~$m=\sum_{i=0}^{k-1}m_i 2^i$.
    Let~$Y=\{ \sum_{i=0}^{k-1} m_i (2n)^{i} : 0\leq m \leq n-1 \}$ (see Figure~\ref{fig:upperbound}).
    Both~$X$ and~$Y$ are symmetric:~$X=\{\max(X)-x: x\in X\}$ and~$Y=\{\max(Y)-y:
y\in Y\}$. Thus, it suffices to show that no convex chain $P\in\chainTopLeft$ of size greater than~$k+1$ exists.

    Consider two points, $p=(x,y)$ and~$p'=(x',y')$, in $X\times Y$ such that $x<x'$ and $y<y'$.
    Assume $y=\sum_{i=0}^{k-1} m_i (2n)^{i}$ and~$y'=\sum_{i=0}^{k-1} m'_i (2n)^{i}$.
    The slope of the line spanned by $p$ and~$p'$ is~$\slope(p,p')=\sum_{i=0}^{k-1} (m'_i-m_i) (2n)^{i} / (x'-x)$.
    Let~$j$ be the largest index such that~$m_j\neq m'_j$. Then $y<y'$ implies $m_j<m'_j$,
        and we can bound the slope as follows:
    \[
    \slope(p,p')
    \geq \frac{(2n)^{j}-\sum_{i=0}^{j-1}(2n)^{i}}{x'-x}
    > \frac{(2n)^{j}-2(2n)^{j-1}}{n-1}
    = 2\cdot (2n)^{j-1},
    \]

    \[
    \slope(p,p')
    \leq \frac{\sum_{i=0}^j (2n)^{i}}{x'-x}
    \leq \frac{\sum_{i=0}^j (2n)^{i}}{1}
    =\frac{(2n)^{j+1}-1}{2n-1}
    < 2\cdot (2n)^{j}\,.\]
    Hence, $\slope(p,p')\in I_j$, where $I_j=(2\cdot (2n)^{j-1},2\cdot(2n)^{j})$.
    Let us define the family of intervals~$I_0,I_1,\dots,I_{k-1}$ analogously, and
    note that these intervals are pairwise disjoint.
    Suppose that some convex chain~$P\in\chainTopLeft$ contains more than~$k+1$ points.
    Since the slopes of the first~$k+1$ edges of~$P$ decrease monotonically, by the pigeonhole principle, there must be three consecutive vertices~$p=(x,y)$, $p'=(x',y')$, and $p''=(x'',y'')$ of~$P$ such that both~$\slope(p,p')$ and~$\slope(p',p'')$ are in the same interval, say $I_j$.
    Assume that $y=\sum_{i=0}^{k-1} m_i (2n)^{i}$, $y'=\sum_{i=0}^{k-1} m'_i (2n)^{i}$,
        and $y''=\sum_{i=0}^{k-1} m''_i (2n)^{i}$.
    Then~$j$ is the largest index such that~$m_j\neq m'_j$, and also the largest index such that~$m'_j\neq m''_j$.
    Because~$m<m'<m''$, we have~$m_j<m'_j<m''_j$, which is impossible since each of~$m_j$,~$m'_j$ and~$m''_j$ is either~$0$ or~$1$.

Hence, $X\times Y$ does not contain any convex chain in $\chainTopLeft$ of size greater than~$k+1$.
Analogously, every convex chain in $\chainTopRight$, $\chainBottomRight$, or
$\chainBottomLeft$ has at most $k+1$ vertices.
Consequently, $X\times Y$ contains at most $4(k+1)$ points in convex position.
\end{proof}

\subsection{Upper bounds in higher dimensions}
\label{ssec:UpperBoundHighDim}

We construct Cartesian products in $\mathbb{R}^d$, for $d\geq 3$, that match the best known upper bound $O(\log^{d-1}n)$ for the Erd\H{o}s--Szekeres numbers in $d$-dimensions for points in general position. Our construction generalizes the ideas from the proof of Lemma~\ref{lem:upper2} to $d$-space.

\begin{lemma}
Let $d\geq 2$ be an integer. For every integer $n\geq 2$, there exist $n$-element sets $X_i\subseteq \mathbb{R}$ for $i=1, \ldots, d$, such that the Cartesian product $X=\Cross_{i=1}^d X_i$ contains at most $O(\log^{d-1} n)$ points in convex position.
\label{lem:upperd}
\end{lemma}

Before proving Lemma~\ref{lem:upperd}, we introduce some additional terminology
and give a brief overview of key ideas. Let $d$ be a positive integer.  Let $\mathbf{e}_d$ be the standard basis vector parallel to the $x_d$-axis in $\mathbb{R}^d$.
For a point $a\in \mathbb{R}^d$, let $a^\proj$ denote the orthogonal projection of $a$ to the horizontal hyperplane $x_d=0$. Let $P\subset \mathbb{R}^d$ be a finite set in convex position; refer to Figure~\ref{fig:silhouette}.
The point set $P$ is \emph{full-dimensional} if no hyperplane contains $P$.
The orthogonal projection of $\conv(P)$ to the hyperplane $x_d=0$ is a convex polytope in $\mathbb{R}^{d-1}$ that we denote by $\conv(P)^\proj$. The \emph{silhouette} of $P$ is the subset of points in $P$ whose orthogonal projection to $x_d=0$ lies on the boundary of $\conv(P)^\proj$. Note that for every point $p\in P$ in the silhouette of $P$, the projection $p^\proj$ is either a vertex of $\conv(P)^\proj$ or lies in the relative interior of a face of $\conv(P)^\proj$.
Since no three points in $P$ are collinear, at most two points in $P$ are projected to any point in the hyperplane $x_d=0$.
A point $p\in P$ is an \emph{upper} (resp., \emph{lower}) vertex if $P$ lies in the closed halfspace below (resp., above) some tangent hyperplane of $\conv(P)$ at $p$ (a point in $P$ may be both upper and lower vertex).
For an integer $k\in \{1,\ldots , d\}$, a $k$-dimensional flat (for short, \emph{$k$-flat}) is \emph{axis-aligned} if it is parallel to $k$ coordinate axes; similarly, a $k$-dimensional polytope ($k$-polytope) is \emph{axis-aligned} if its affine hull is an axis-aligned $k$-flat.
A $k$-flat or $k$-polytope is \emph{vertical} if its affine hull is parallel to the $x_d$-axis; or equivalently if it contains a nondegenerate line segment parallel to the $x_d$-axis.

\begin{figure}[htbp]
    \centering
	\includegraphics{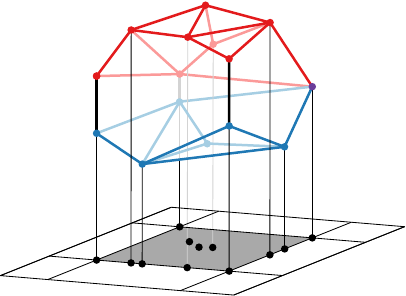}%
	\caption{A convex polyhedron $\conv(P)$ in $\mathbb{R}^3$, whose projection
	$\conv(P)^\proj$ is a rectangle. Seven points in $P$ are projected onto the four vertices  of $\conv(P)^\proj$. Overall, the silhouette of $P$ contains twelve points, five of which project into the relative interior of some edges of $\conv(P)^\proj$. Red (blue) vertices are upper (lower); the purple point is both upper and lower.}%
	\label{fig:silhouette}
\end{figure}

The proof of Lemma~\ref{lem:upperd} is constructive. For integers $0\leq i\leq j$,
we recursively define a $2^j\times \ldots \times 2^j\times 2^i$ grid in $\mathbb{R}^d$, and denote it by $S_d(i,j)$. For $i\geq 1$, the grid $S_d(i,j)$ is the disjoint union of two translated copies of $S_d(i-1,j)$, one above the other. Then we prove that every $d$-dimensional subset $P\subset S_d(i,j)$ in convex position contains at most $2^{d(d+1)}\cdot i\cdot j^{d-2}$ upper (resp., lower) vertices. For $i=j=\log n$, this implies that $|P|\leq 2\cdot 2^{d(d+1)}\cdot j^{d-1}=\Theta(\log^{d-1}n)$. The number of upper vertices in $P$ can be bounded as follows (the case of lower vertices is analogous). We partition the upper vertices in $P$ into three subsets: (1)~Upper vertices $p\in P$ such that $p^\proj$ is a vertex of $\conv(P)^\proj$ in $(d-1)$-dimensions. As $p^\proj$ may be an upper or a lower vertex in $S_{d-1}(j,j)$, by induction, there are at most $2\cdot 2^{(d-1)d}\cdot j^{d-2}$ such points. (2)~Upper vertices $p\in P$ such that $p^\proj$ lies in the interior of $\conv(P)^\proj$. We prove that one of the two translated copies of $S_d(i-1,j)$ contains these points as upper vertices, and by induction there are at most $2^{d(d+1)}\cdot (i-1)\cdot j^{d-2}$ such points.
(3)~Upper vertices $p\in P$ such that $p^\proj$ lies on the boundary of $\conv(P)^\proj$, but in the relative interior of some face of $\conv(P)^\proj$; see Figure~\ref{fig:silhouette} for examples. We show that only axis-aligned faces of $\conv(P)^\proj$ can contain interior grid points. We can control the number of such faces and such interior points by induction. Summation over axis-aligned faces of dimensions $1,\ldots , d-1$  yields an upper bound of $(2^{d-1}-2)2^{(d-1)d+3}\cdot j^{d-2}$ for such points. Summation over (1)--(3) yields the desired upper bound $2^{d(d+1)}\cdot i\cdot j^{d-2}$ for the number of upper (resp., lower) vertices.

\begin{proof}[Proof of Lemma~\ref{lem:upperd}]
It is enough to prove the claim when $n=2^j$ for integers $j\geq 0$. For integers $d\geq 3$ and $0\leq i\leq j$, we recursively define the grid $S_d(i,j)$ as a Cartesian product of $d$ sets of reals, where the first $d-1$ sets have $2^j$ elements and the last set has $2^i$ elements. In particular, $|S_d(i,j)|=2^{i+(d-1)j}$. We then show that $S_d(i,j)$ does not contain the vertex set of any full-dimensional convex polyhedron with more than $2^{(d^2+d+1)}\cdot i\cdot j^{d-2}$ vertices.

To initialize the recursion, we define boundary values as follows: For every integer $j\geq 0$, let $S_2(j,j)$ be the $2^j\times 2^j$ grid defined in the proof of Lemma~\ref{lem:upper2} that does not contain more than $4(j+1)$ points in convex position. Note that every line that contains 3 or more points from $S_2(j,j)$ is axis-parallel (this property was not needed in the proof of Lemma~\ref{lem:upper2}).
Assume now that $d\geq 3$, and $S_{d-1}(j,j)$ has been defined for all $j\geq 0$; and for all $k\in \{1,\ldots , d-2\}$, if a point $p\in S_{d-1}(j,j)$ lies in the relative interior of $k$-polytope $F$ whose vertices are in $S_{d-1}(j,j)$, then $F$ is axis-aligned.

Let $j$ be a nonnegative integer. We now construct $S_d(i,j)$ for all integers $0\leq i\leq j$ as follows.
Let $S_d(0,j)=S_{d-1}(j,j)\times \{0\}$. For $i=1,\ldots, j$, we define $S_d(i,j)$ as the disjoint union of two translates of $S_d(i-1,j)$. Specifically, let $S_d(i,j)=A\cup B$, where $A=S_d(i-1,j)$ and $B=A+\lambda_d^i\mathbf{e}_d$, where $\lambda_d^i>0$ is
a (large) scalar satisfying the following two conditions.
\begin{enumerate}[label={\rm (C\arabic*)},series=C]
\item\label{C1} For all $p\in A\cup B$ and $Q\subset A\cup B$, if $p$ is in the relative interior of $\conv(Q)$, then $\conv(Q)$ is an axis-aligned polytope.
\item\label{C2} For all $p_a\in A$, $p_b\in B$, and $Q\subset S_d(0,j)=S_{d-1}(j,j)\times \{0\}$, if both $p_a^\proj$ and $p_b^\proj$ are in the relative interior of $\conv(Q)$, then the line $p_ap_b$ intersects the horizontal hyperplane $x_d=0$ in the relative interior of $\conv(Q)$.
\end{enumerate}

We show that a scalar $\lambda_d^i$ satisfying both \ref{C1} and \ref{C2} exists. 
First, we may assume that $\lambda_d^i>0$ is sufficiently large so that $A$ and $B$ are separated by some horizontal hyperplane.
Consider a set $Q\subset A\cup B$ such that $Q$ contains points in both $A$ and $B$. Let $F=\conv(Q)$, $F_a=\conv(Q\cap A)$, and $F_b=\conv(Q\cap B)$. 
Since $\conv(Q)=\conv(F_a\cup F_b)$, we have $F\subset \conv(F_a\cup B)$ and $F\subset \conv(A\cup F_b)$. We set $\lambda_d^i>0$ sufficiently large so that every point in $A\cap \conv(F_a\cup B)$ is on a vertical line passing through a point in $F_a$, and every point in $B\cap \conv(A\cup F_b)$ is on a vertical line passing through a point in $F_b$. Assume that this condition holds for all $Q\subset A\cup B$, where $Q\cap A\neq \emptyset$ and $Q\cap B\neq \emptyset$.

We show that \ref{C1} holds if $\lambda_d^i$ is chosen as specified above. 
Let $Q\subset A\cup B$ and let $p\in A\cup B$ be in the relative interior of $\conv(Q)$.
Let $F=\conv(Q)$, $F_a=\conv(Q\cap A)$, and $F_b=\conv(Q\cap B)$, as above, and $F$ is a $k$-dimensional polytope for some $k\in \{1,\ldots ,d\}$. We distinguish between several cases based on the location of $p$.
(1) If $p$ is in the relative interior of $F_a$ (resp., $F_b$), then $F_a$ (resp., $F_b$) is axis-aligned by induction, and it has the same dimension as $F$, hence $F$ is also axis-aligned. 
(2) Otherwise $p$ is contained in the relative interior of neither $F_a$ nor $F_b$. By Carath\'eodory's theorem~\cite{eggleston1958}, $p$ is in the relative interior of some simplex $F'=\conv(Q')$ with vertex set $Q'\subset Q$ such that $Q'$ has vertices in both $A$ and $B$. By the choice of $\lambda_d^i$, point $p$ is on the vertical line passing through a point in $Q'\cap A$ or $Q'\cap B$. This implies that $p$ lies on a vertical line segment contained in $F'$, hence in $F$. Consequently, $F$ is parallel to the $x_d$-axis, and $F^\proj$ is a $(k-1)$-polytope. If $k=1$, this already proves that $F$ is axis-aligned. Assume that $k>1$ and note that $p^\proj\in S_{d-1}(j,j)$ and it is in the relative interior of $F^\proj$. By induction, $F^\proj$ is an axis-aligned $(k-1)$-polytope. Overall, $F$ is parallel to the $x_d$-axis and the $k-1$ coordinate axes parallel to $F^\proj$, consequently $F$ is axis-aligned, as required.

To satisfy condition \ref{C2}, notice that for any $p_a\in A$ and $p_b\in B$, the intersection of the line $p_a p_b$ and the hyperplane $x_d=0$ can be arbitrarily close to $p_a^\proj$ if $\lambda_d^i>0$ is sufficiently large.
This completes the definition of $S_d(i,j)$ for $d\geq 3$ and $0\leq i\leq j$.

Note, however, that in each iteration of the recursive contraction of $S_d(i,j)$, we only imposed lower bounds for the scalars $\lambda_d^i$, and different values of these parameters would yield different point sets. For all integers $d\geq 1$ and $0\leq i\leq j$, let $\mathcal{S}_d(i,j)$ denote the collection of all $2^j\times \ldots \times 2^j\times 2^i$ grids that can be obtained by the recursive construction above, in particular every grid in $\mathcal{S}_d(i,j)$ satisfies both \ref{C1} and \ref{C2} in each iteration. Our proof relies on the following relation between these constructions. 

\begin{claim}\label{cl:0}
Let $2\leq \ell\leq d$, let $S_d(i,j)\in \mathcal{S}_d(i,j)$, and let $W$ be an axis-aligned vertical $\ell$-flat such that $S_d(j,j)\cap W\neq \emptyset$. Then $S_d(i,j)\cap W\in \mathcal{S}_\ell(i,j)$.
\end{claim}
To prove Claim~\ref{cl:0}, note first that $S_d(i,j)\cap W$ is a $2^j\times \ldots \times 2^j\times 2^i$ grid. It can be obtained by the recursive construction above with suitable parameters $\lambda_d^i$ in the recursive steps. Indeed, assume that $S_d(i,j)=A\cup B$, where $A=S_d(i-1,j)$ and $B=A+\lambda_d^i\mathbf{e}_d$. Then $S_d(i,j)\cap W=(A\cap W)\cup (B\cup W)$. As both \ref{C1} and \ref{C2} are formulated in terms of the relative interior of a set $Q\subset A\cup B$, both conditions hold for $Q\subset (A\cup B)\cap W$, as well. This completes the proof of Claim~\ref{cl:0}.

In the remainder of the proof, we establish the following claim by induction:

\begin{claim}\label{cl:1}
Let $2\leq d$ and $S_d(i,j)\in \mathcal{S}_d(i,j)$. If $P\subset S_d(i,j)$ is a $d$-dimensional set in convex position, then $P$ contains at most $2^{d(d+1)}\cdot i\cdot j^{d-2}$ upper (resp., lower) vertices of $\conv(P)$.
\end{claim}
We prove Claim~\ref{cl:1} by induction on $d+i$. We focus on the number of upper vertices in $P$, the case of lower vertices follows analogously (e.g., by reflection in a horizontal hyperplane). 

In the base case, we have $d=2$ and $i=0$. Then the set $S_d(0,j)=S_{d-1}(j,j)\times \{0\}$ lies in a horizontal hyperplane in $\mathbb{R}^d$, and so a subset $P\subset S_d(0,j)$ cannot be full-dimensional, hence the claim vacuously holds for all $P\subset S_d(0,j)$. We observe a key property for $P$ (which will be used in the case $i=1$). If $d=2$ and $P\subset S_d(0,j)$ is a 1-dimensional set, then $P$ has precisely two extreme points in $\mathbb{R}^2$.
If $d\geq 3$ and $P\subset S_d(0,j)$ is a $(d-1)$-dimensional set, then by induction it contains at most $2^{(d-1)d}\cdot j \cdot j^{d-3}$ upper (resp., lower) vertices in $\mathbb{R}^{d-1}$, hence it has at most $2\cdot 2^{(d-1)d}\cdot j^{d-2}$ extreme points in $\mathbb{R}^d$.

Assume next that $i=1$, and either $d=2$ or the claim holds for $S_{d-1}(j,j)$.
We prove the claim for $S_d(1,j)$. The set $S_d(1,j)$ is the disjoint union of $A=S_d(0,j)$ and $B=S_{d}(0,j)+\lambda_d^1\mathbf{e}_d$. Let $P\subset S_d(1,j)$ be in convex position. Then every upper (resp., lower) vertex of $P$ is an extreme vertex in $P\cap A$ or $P\cap B$, hence $P$ contains at most
$2\cdot 2\cdot 2^{(d-1)d}\cdot j^{d-2}
=2^{d^2-d+2}\cdot j^{d-2}\leq 2^{d(d+1)}\cdot j^{d-2}
=2^{d(d+1)}\cdot 1\cdot j^{d-2}$
upper (resp., lower) vertices. 

In the general case, we assume that $2\leq i\leq j$, the claim holds for $S_d(i-1,j)$, and either $d=2$ or the claim holds for $S_{d-1}(j,j)$, as well. We prove the claim for $S_d(i,j)$. Recall that $S_d(i,j)$ is the disjoint union of two translates of $S_d(i-1,j)$, namely $A=S_d(i-1,j)$ and $B=S_d(i-1,j)+\lambda_d^i\mathbf{e}_d$. Let $P\subset S_d(i,j)$ be a full-dimensional set in convex position. We partition the upper vertices in $P$ as follows. For every upper vertex $p$, we have $p^\proj\in \conv(P)^\proj$,
and $p^\proj$ is either a vertex of $\conv(P)^\proj$ or lies in the relative interior of a $k$-face of $\conv(P)^\proj$ for some $k\in \{1,\ldots , d-1\}$.
Let $P_0$ be the set of upper vertices $p\in P$ such that $p^\proj$ is a vertex of $\conv(P)^\proj$.
For all $k\in \{1,\ldots ,d-1\}$, let $P_k\subset P$ be the set of upper vertices of $P$ such that $p^\proj$ lies in the relative interior of a $k$-face of $\conv(P)^\proj$.
Then $\bigcup_{k=0}^{d-1} P_k$ is the set of all upper vertices in $P$.
%

The orthogonal projection of $S_d(i,j)$ to the hyperplane $x_d=0$ is $S_{d-1}(j,j)$, and the orthogonal projection of $P_0$, denoted $P_0^\proj\subset S_{d-1}(j,j)$, is the vertex set of a $(d-1)$-dimensional convex polyhedron. By induction, $|P_0|\leq 2\cdot 2^{(d-1)d}\cdot j\cdot j^{d-3}=2^{(d-1)d+1}j^{d-2}$. To derive an upper bound on $|P_k|$ for $k\in \{1,\ldots , d-1\}$,
note that only axis-aligned faces of $\conv(P)^\proj$ can contain grid points of $S_{d-1}(j,j)$ in their relative interior by condition \ref{C1}. 
In order to complete the proof of Claim~\ref{cl:1}, we prove the following.

\begin{claim}\label{cl:2}
For every axis-aligned face $F$ of $\conv(P)^\proj$,
either $A$ or $B$ contains all upper (resp., all lower) 
vertices that project to the relative interior of $F$.
\end{claim}

Let $F$ be an axis-aligned $k$-face of $\conv(P)^\proj$ for $k\in \{1,2,\ldots , d-1\}$.
Let $P(F)\subset P$ be the set of upper vertices $p\in P$ such that $p^\proj$ lies in the relative interior of $F$,
and let $V(F)$ be the vertex set of $F$.
Since $F$ is a face of the convex polytope $\conv(P)^\proj$ in the horizontal hyperplane $x_d=0$,
we have $V(F)\subset S_{d-1}(j,j)$ and $V(F)$ is in convex position.
Consider the point set $P'=V(F)\cup P(F)$, and observe that if $P(F)\neq \emptyset$,  then $P'$ is the vertex set of the $(k+1)$-polytope $\conv(P')$, in which $F$ is one of the facets.

It remains to show that $P(F)\subseteq A$ or $P(F)\subseteq B$. Suppose, for the sake of contradiction, that $P(F)$ contains points from both $A$ and $B$. Let $p_a$ be a vertex in $P(F)\cap A$ with the maximum $x_d$-coordinate. By construction, every point in $B$ has higher $x_d$-coordinate than any point in $A$ (including $p_a$ and all vertices of $V(F)$). The 1-skeleton of $\conv(P')$ contains a $x_d$-monotonically increasing path from $p_a$ to an $x_d$-maximal vertex in $P'$. Let $p_b$ be the neighbor of $p_a$ along such a path. By the maximality of $p_a$, we have $p_b\in B$. Then $p_a p_b$ is an edge of $\conv(P')$, hence the line $p_a p_b$ is disjoint from the interior of $\conv(P')$. However, by condition \ref{C2}, the line $p_a p_b$ intersects the relative interior of facet $F$ of $\conv(P')$. This contradiction completes the proof of Claim~\ref{cl:2}.

We can now finish the proof of Claim~\ref{cl:1}. 
We have seen that $|P_0|\leq 2^{(d-1)d+1}j^{d-2}$.
It remains to bound $|P_k|$ for $k\in \{1,\ldots , d-1\}$. Note that $\conv(P)^\proj$ is a $(d-1)$-dimensional polytope. We can apply Claim~\ref{cl:2} for $k=d-1$ and $F=\conv(P)^\proj$, and conclude that either $A$ or $B$ contains all points in $P_{d-1}$. The points in $P_{d-1}$ are upper vertices of $\conv(P_{d-1})$. Since both $A$ and $B$ are translates of $S_d(i-1,j)$, the induction hypothesis yields  $|P_{d-1}|\leq 2^{d(d+1)}\cdot (i-1)\cdot j^{d-2}$.

For $k\in \{1,\ldots , d-2\}$, the axis-aligned $k$-faces of $\conv(P)^\proj$
are partitioned into $\binom{d-1}{k}$ equivalence classes. In each equivalence class, the $k$-faces $F$ are parallel to the same coordinate axes. In the projection to an orthogonal $(d-k-1)$-dimensional space (of the hyperplane $x_d=0$), they become vertices of a $(d-k-1)$-polytope. By induction, the number of (upper and lower) vertices of such a polytope is
bounded above by
$2\cdot 2^{(d-k-1)(d-k)}j^{(d-k-1)-1}=2^{(d-k-1)(d-k)+1}j^{d-k-2}$;
this is an upper bound on the size of the equivalence class.
Each axis-aligned $k$-face $F$ of $\conv(P)^\proj$ is the orthogonal projection of some axis-aligned $(k+1)$-face of $\conv(P)$, that we denote by $\overline{F}$. Claim~\ref{cl:0} with $\ell=k+1\leq d-1$ implies that the induction hypothesis holds for $\overline{F}$, and it yields an upper bound of $2^{(k+1)(k+2)}\cdot j^{k}$ for the number of upper vertices in $\overline{F}$. Overall, we have

\vspace{-\baselineskip}
\begin{align*}
|P_k|
&\leq \binom{d-1}{k}\cdot 2^{(d-k-1)(d-k)+1}j^{d-k-2}\cdot 2^{(k+1)(k+2)}\cdot j^{k}\\
&\leq \binom{d-1}{k}\cdot 2^{(d-1)d+3}\cdot j^{d-2},
\end{align*}
where we used that $(d-k-1)(d-k)+1+(k+1)(k+2)= (d-1)d+2k(k+2-d)+3\leq (d-1)d+3$ for all
$k\in \{1,\ldots , d-2\}$ and $d\geq 3$. Altogether, the number of upper vertices is

\vspace{-\baselineskip}
\begin{align*}
\sum_{k=0}^{d-1}|P_k|
&= |P_0|+\left(\sum_{k=1}^{d-2} |P_k|\right)+|P_{d-1}|\\
&\leq 2\cdot 2^{(d-1)d}\cdot j^{d-2}  +
    \left(\sum_{k=1}^{d-2}\binom{d-1}{k} 2^{(d-1)d+3}\cdot j^{d-2}\right)
    + 2^{d(d+1)}\cdot (i-1)\cdot j^{d-2}\\
&< 2^{d-1}\cdot 2^{(d-1)d+3}j^{d-2} + 2^{d(d+1)} \cdot (i-1)\cdot j^{d-2}\\
&< 2^{d(d+1)}j^{d-2} + 2^{d(d+1)} \cdot (i-1)\cdot j^{d-2}\\
&= 2^{d(d+1)}\cdot i \cdot j^{d-2},
\end{align*}
as required, where we used the binomial theorem.
This completes the proof of Claim~\ref{cl:1}.

Claim~\ref{cl:1} immediately implies Lemma~\ref{lem:upperd} by taking $n=2^j$ and $X=S_d(j,j)$, since $2\cdot 2^{d(d+1)}\cdot j^{d-1}=\Theta(\log^{d-1}n)$ for every fixed $d\geq 2$.
\end{proof}

\subsection{Lower bounds in higher dimensions}
\label{ssec:LowerBoundHighDim}

The proof technique in Section~\ref{ssec:lowerbound2D} is insufficient for establishing a lower bound of $\Omega(\log^{d-1}n)$
for $d\geq 3$. Whereas a $d$-dimensional $n\times \ldots \times n$ grid contains $\Omega(n^\delta)$ points in general position for some $\delta=\delta(d)>0$~\cite{cardinal2017}, the current best lower bound on the number of points in convex position in any set of $n$ points in general position in $\mathbb{R}^d$ is $\Omega(\log n)$; the conjectured value is $\Omega(\log^{d-1} n)$.
Instead, we rely on the structure of Cartesian products and induction on $d$. Our main result in this section is the following.

\begin{theorem}
Every $d$-dimensional Cartesian product $\Cross_{i=1}^d X_i$, where $|X_i|=n$ and $d$ is fixed, contains $\Omega(\log^{d-1}n)$ points in convex position.
\label{thm:lowerd}
\end{theorem}

We say that a strictly increasing sequence of real numbers $A=(a_1,\ldots, a_n)$ has the \emph{monotone differences} property (for short, $A$ is \emph{MD}) if
\begin{itemize}
\item $a_{i+1}-a_i > a_i-a_{i-1}$ for $i=2,\ldots, n-1$, or
\item $a_{i+1}-a_i < a_i-a_{i-1}$ for $i=2,\ldots, n-1$.
\end{itemize}
Furthermore, the sequence $A$ is \emph{$r$-MD} for some $r>1$ if
\begin{itemize}
\item $a_{i+1}-a_i \geq r(a_i-a_{i-1})$ for $i=2,\ldots, n-1$, or
\item $a_{i+1}-a_i \leq (a_i-a_{i-1})/r$ for $i=2,\ldots, n-1$.
\end{itemize}
A finite set $X\subseteq \mathbb{R}$ is \emph{MD} (resp., \emph{$r$-MD}) if its elements arranged in increasing order form an MD (resp., $r$-MD) sequence. These sequences are intimately related to convexity: a strictly increasing sequence $A=(a_1,\ldots, a_n)$ is MD if and only if there exists a monotone (increasing or decreasing) convex function $f:\mathbb{R}\rightarrow \mathbb{R}$ such that $a_i=f(i)$ for all $i=1,\ldots, n$.  MD sets have been studied in additive combinatorics~\cite{ENZ99,Hegy86,RSS17,SS11}.

We first show that every $n$-element set $X\subseteq \mathbb{R}$ contains an MD subset of size $\Omega(\log n)$, and this bound is the best possible (Lemma~\ref{lem:gap-sequence}). In contrast, every $n$-term arithmetic progression contains an MD subsequence of $\Theta(\sqrt{n})$ terms: for example $(0,\ldots, n-1)$ contains the subsequence $(i^2: i=0,\ldots, \lfloor \sqrt{n-1}\rfloor)$. We then show that for constant $d\geq 2$, the $d$-dimensional Cartesian product of $n$-element MD sets contains $\Theta(n^{d-1})$ points in convex position. The combination of these results immediately implies that every $n\times \ldots \times n$ Cartesian product in $\mathbb{R}^d$ contains $\Omega(\log^{d-1}n)$ points in convex position.

The following lemma gives a lower bound for MD sequences.
It is known that a monotone sequence of $n$ reals contains a 2-MD sequence (satisfying the so-called \emph{doubling differences condition}~\cite{Ros81}) of size $\Omega(\log n)$~\cite[Lemma~4.1]{BM14}; see also \cite{CFPSS14,ELIAS20131} for related recent results.

\begin{lemma}
\label{lem:gap-sequence}
Every set of $n$ real numbers contains an MD subset of size $\lfloor (\log n)/2\rfloor+1$.
For every $n\in \mathbb{N}$, there exists a set of $n$ real numbers in which the size of
every MD subset is at most $\lceil \log n\rceil +1$.
\end{lemma}

\begin{proof}
Let $X=(x_0,\ldots , x_{n-1})$ be a strictly increasing sequence. Assume, without loss of generality, that $n=2^\ell+1$ for some $\ell\in \mathbb{N}$. We construct a sequence of nested intervals
\[[a_0,b_0]\supset [a_1,b_1]\supset \ldots \supset [a_\ell,b_\ell]\]
such that the endpoints of the intervals are in $X$ and the lengths of the intervals decrease by factors of 2 or higher, that is, $b_i-a_i\leq (b_{i-1}-a_{i-1})/2$ for $i=1,\ldots, \ell$.

We start with the interval $[a_0,b_0]=[x_0,x_{n-1}]$; and for every $i=0,\ldots, \ell-1$, we divide $[a_i,b_i]$ into two intervals at the median, and recurse on the shorter interval; see Figure~\ref{fig:intervals}.

\begin{figure}[htbp]
\centering
	\includegraphics{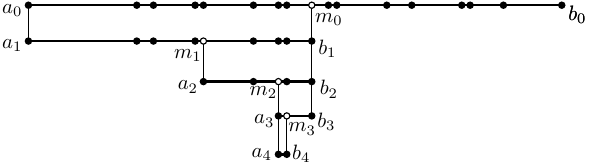}%
	\caption{A sequence $X$ of 17 elements and nested intervals $[a_0,b_0]\supset \ldots \supset [a_4,b_4]$.}
	\label{fig:intervals}
\end{figure}

By partitioning $[a_i,b_i]$ at the median, the algorithm maintains the invariant that $[a_i,b_i]$ contains $2^{\ell-i}+1$ elements of $X$. Note that for every $i=1,\ldots , \ell$, we have either ($a_{i-1}=a_i$ and $b_{i-1}>b_i$) or ($a_{i-1}<a_i$ and $b_i=b_{i-1}$). Consequently, the sequences $A=(a_0,a_1,\ldots , a_\ell)$ and $B=(b_\ell, b_{\ell-1},\ldots , b_0)$ both increase (not necessarily strictly), and at least one of them contains at least $1+\ell/2$ distinct terms. Assume, without loss of generality, that $A$ contains at least $1+\ell/2$ distinct terms. Let $C=(c_0,\ldots , c_k)$ be a maximal strictly increasing subsequence of $A$. Then $k\geq  \ell/2=\lfloor (\log n)/2\rfloor$.

We show that $C$ is an MD sequence. Let $i\in \{1,\ldots, k-1\}$.
Assume that $c_i=a_{j}=\ldots = a_{j'}$ for consecutive indices $j,\ldots ,j'$.
Then $c_{i-1}=a_{j-1}$, $c_i=a_j$, and $c_{i+1}=a_{j'+1}$. By construction, $c_i\in [a_{j-1},b_{j-1}]=[c_{i-1},b_j]$ such that $c_i-c_{i-1}\geq b_j-c_i$. Similarly, $c_{i+1}\in [a_{j'},b_{j'}]=[c_i,b_{j'}]$ such that
  $c_{i+1}-c_i\geq b_{j'}-c_{i+1}$. However, $[a_j,b_j]\subset [a_{j-1},b_{j-1}]$. As required, this yields
\[c_{i+1}-c_i = a_{j'+1}-a_j <b_j-a_j \leq \frac{b_{j-1}-a_{j-1}}{2}\leq a_j-a_{j-1}=c_i-c_{i-1}. \]

For the upper bound, consider the $n\times n$ grid $\{0,\ldots, n-1\}\times Y$, defined in the proof of Lemma~\ref{lem:upper2}, for which every chain in $\chainBottomLeft$ or $\chainBottomRight$ supported by
$\{0,\ldots, n-1\}\times Y$ has at most $\lceil \log n\rceil+1$ vertices.
We claim that the size of every MD subset of $Y\subset \mathbb{R}$ is at most $\lceil \log n\rceil +1$. Let $\{b_0,\ldots , b_{\ell-1}\}\subset Y$ be an MD subset such that $b_0<\ldots < b_{\ell-1}$. Then $\{(i,b_i): i=0,\ldots ,\ell-1\}\subset X\times Y$ is in $\chainBottomLeft$ or $\chainBottomRight$. Consequently, every MD subset of $Y$ has at most $\lceil \log n \rceil+1$ terms, as claimed.
\end{proof}

We show how to use Lemma~\ref{lem:gap-sequence} to establish a lower bound in the plane. While this approach yields worse constant coefficients than Lemma~\ref{lem:lowerGeneral}, its main advantage is that it generalizes to higher dimensions (see Lemma~\ref{lem:prod} below).

\begin{lemma}\label{lem:2D}
The Cartesian product of two MD sets, each of size $n$, supports $n$ points in convex position.
\end{lemma}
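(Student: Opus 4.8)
The plan is to exhibit a single strictly convex monotone chain that uses each $x$-coordinate and each $y$-coordinate exactly once; a chain of $n$ points with strictly monotone edge slopes is automatically in convex position, and using each coordinate once makes it supported. First I would normalize the two sets. Reflecting a coordinate ($x\mapsto -x$ or $y\mapsto -y$) is an affine map that preserves convex position, maps an MD set to an MD set, and reverses the direction of monotonicity of its consecutive differences. Hence I may assume without loss of generality that both $X=\{x_0<\cdots<x_{n-1}\}$ and $Y=\{y_0<\cdots<y_{n-1}\}$ have strictly \emph{increasing} consecutive differences. Write $a_i=x_{i+1}-x_i$ and $b_i=y_{i+1}-y_i$; by this assumption $(a_i)$ and $(b_i)$ are strictly increasing and positive.

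Next I would pair the two sequences in opposite order: define $p_i=(x_i,\,y_{n-1-i})$ for $i=0,\ldots,n-1$. This uses each element of $X$ and of $Y$ exactly once, so no two of the points $p_i$ share an $x$- or $y$-coordinate, which is exactly what ``supported'' requires. The $x$-coordinates of $p_0,\ldots,p_{n-1}$ strictly increase, so every consecutive edge has a well-defined slope, and a direct computation gives $\slope(p_i,p_{i+1})=-\,b_{n-2-i}/a_i$.

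The heart of the argument is to verify that these slopes are strictly monotone. As $i$ increases, $a_i$ strictly increases while $b_{n-2-i}$ strictly decreases (since $(b_i)$ increases and the index $n-2-i$ decreases); therefore the magnitude $b_{n-2-i}/a_i$ strictly decreases and the negative slopes $-\,b_{n-2-i}/a_i$ strictly increase. A sequence of points with strictly increasing $x$-coordinates and strictly increasing edge slopes forms a strictly convex chain, so every $p_i$ is a vertex of the lower convex hull of $\{p_0,\ldots,p_{n-1}\}$; the $n$ points are thus in convex position. Combined with the reflection step, this proves the lemma.

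I expect the only genuinely delicate point to be the monotonicity check together with the bookkeeping of the four sign patterns for the differences of $X$ and $Y$; the reflection normalization is precisely what collapses these four cases into one, after which the slope estimate is routine. A secondary point I would state carefully is that strict monotonicity of the slopes — guaranteed by the strict inequalities in the definition of \emph{MD} — is what rules out three collinear points and certifies genuine convex position rather than a degenerate chain. I would also note that this construction, being a single opposite-order matched chain, is the natural base case for the inductive higher-dimensional argument of Lemma~\ref{lem:dDimensions}.
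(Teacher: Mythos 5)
Your proof is correct and takes essentially the same approach as the paper: both normalize the two MD sets by reflection, pair the coordinates in opposite order along the anti-diagonal (your $p_i=(x_i,y_{n-1-i})$ is the paper's $\{(a_i,b_j): i+j=n+1\}$), and certify convex position by checking that the edge slopes are strictly monotone. The only cosmetic difference is the direction of normalization (you take increasing differences, giving increasing slopes; the paper takes decreasing differences, giving decreasing slopes), which is immaterial.
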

\begin{proof}
Let $A=\{a_1,\ldots, a_n\}$ and $B=\{b_1,\ldots ,b_n\}$ be MD sets
such that $a_i<a_{i+1}$ and $b_i<b_{i+1}$ for $i\in\{1,\ldots ,n-1\}$.
We may assume, by applying a reflection if necessary,
that $a_{i+1}-a_i < a_i-a_{i-1}$ and $b_{i+1}-b_i < b_i-b_{i-1},$
for $i\in \{2,\ldots, n-1\}$; see Figure~\ref{fig:lowerbound}.

\begin{figure}\centering
	\includegraphics{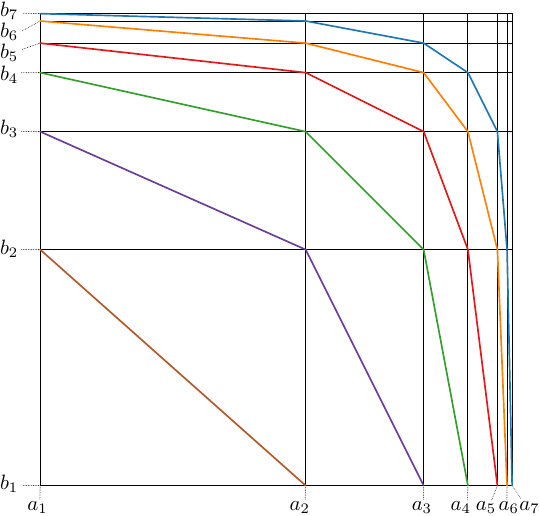}%
	\caption{A $7 \times 7$ grid $\{a_1,\ldots , a_7\}\times \{b_1,\ldots, b_7\}$,
where the differences between consecutive $x$-coordinates (resp., $y$-coordinates) decrease by factors of 2 or higher.
The point sets $\{(0,0)\}\cup \{(a_i,b_j): i+j=k\}$, for $k=2,\ldots, 8$, form nested convex chains.}%
	\label{fig:lowerbound}
\end{figure}

We define $P\subset A\times B$ as the set of $n$ points $(a_i,b_j)$ such that $i+j=n+1$.
By construction, every horizontal (vertical) line contains at most one point in $P$.
Since the differences $a_i-a_{i-1}$ are positive and strictly decrease in $i$;
and the differences $b_{n-i}-b_{n-i-1}$ are also positive and strictly decrease in $i$, the slopes $(b_{n-i}-b_{n-i-1})/(a_i-a_{i-1})$ strictly decrease, which proves the convexity of $P$.
\end{proof}

\begin{lemma}\label{lem:3D}
The Cartesian product of three MD sets, each of size $n$, contains $\binom{n+1}{2}$ points in convex position.
\end{lemma}

\begin{proof}
Let $A=\{a_1,\ldots,a_n\}$, $B=\{b_1,\ldots, b_n\}$, and $C=\{c_1,\ldots, c_n\}$
be MD sets, where the elements are labeled in increasing order.
We may assume, by applying a reflection in the $x$-, $y$-, or $z$-axis if necessary, that
\[a_{i+1}-a_i < a_i-a_{i-1}, \hspace{.3in}
  b_{i+1}-b_i < b_i-b_{i-1}, \hspace{.3in}
  c_{i+1}-c_i < c_i-c_{i-1},\]
for $i=2,\ldots, n-1$.
For $i,j,k\in \{1,\ldots, n\}$, let $p_{i,j,k}=(a_i,b_j,c_k)\in A\times B\times C$.
We can now let $P=\{p_{i,j,k}: i+j+k=n+2\}$. It is clear that $|P|=\sum_{i=1}^{n} i=\binom{n+1}{2}$.
We let $P'=P\cup \{p_{1,1,1}\}$ and show that the points in $P'$ are in convex position.

By Lemma~\ref{lem:2D}, the points in $P'$ lying in the planes $x=a_1$, $y=b_1$, and $z=c_1$ are each in convex position.
These convex $(n+1)$-gons are faces of the convex hull of $P$, denoted $\text{conv}(P)$. We show that the remaining faces of $\text{conv}(P)$ are the triangles $T'_{i,j,k}$ spanned by $p_{i,j,k}$, $p_{i,j+1,k-1}$, and $p_{i+1,j,k-1}$; and the triangles $T''_{i,j,k}$ spanned by $p_{i,j,k}$, $p_{i,j-1,k+1}$, and $p_{i-1,j,k+1}$.

The projection of these triangles to an $xy$-plane is shown in Figure~\ref{fig:lowerbound}. By construction, the union of these faces is homeomorphic to a sphere. It suffices to show that the dihedral angle between any two edge-adjacent triangles is convex.
Without loss of generality, consider triangle $T'_{i,j,k}$, which shares an edge with  (up to) three other triangles:
$T''_{i+1,j,k-1}$, $T''_{i,j+1,k-1}$, and $T''_{i+1,j+1,k-2}$. Consider first the triangles
$T'_{i,j,k}$ and $T''_{i+1,j+1,k-2}$. They share the edge $p_{i+1,j-1,k+1}p_{i-1,j+1,k+1}$, which lies in the $xy$-plane $z=c_{k+1}$. The orthogonal projections of these triangles to an $xy$-plane are congruent, however their extents in the $z$-axis
are $c_{i+1}-c_i$ and $c_i-c_{i-1}$, respectively. Since $c_{i+1}-c_i < c_i-c_{i-1}$, their dihedral angle is convex.
Similarly, the dihedral angles between $T'_{i,j,k}$ and $T''_{i+1,j,k-1}$ (resp.,  $T''_{i,j+1,k-1}$) is convex
because $a_{i+1}-a_i < a_i-a_{i-1}$ and $b_{i+1}-b_i < b_i-b_{i-1}$.
\end{proof}

The proof technique of Lemma~\ref{lem:3D} generalizes to higher dimensions:

\begin{lemma}\label{lem:prod}
For every constant $d\geq 2$, the Cartesian product of $d$ MD sets, each of size $n$, contains
$\Omega(n^{d-1})$ points in convex position.
\end{lemma}
\begin{proof}
We proceed by induction on $d$. For $d=2$ and $d=3$, Lemmas~\ref{lem:2D} and \ref{lem:3D} prove the claim.
Assume that $d\geq 4$, and the claim holds in lower dimensions.
For every $i\in \{1,\ldots, d\}$, let $A_i=\{a_{i,1},\ldots,a_{i,n}\}\subset \mathbb{R}$ be an MD set such that $a_{i,1}<\ldots<a_{i,n}$. We may assume, without loss of generality, that the differences between consecutive elements in $A_i$ strictly decrease for all $i\in\{1,\ldots , d\}$.

For every vector $\mathbf{v}=(v_1,\ldots , v_d)\in \{1,\ldots , n\}^d$, let $p_{\mathbf{v}}= (a_{1,v_1},a_{2,v_2},\ldots , a_{d,v_d})\in \Cross_{i=1}^d A_i$. Let $P=\{p_{\mathbf{v}}: \sum_{i=1}^d v_i=n+d-1\}$.
It is easy to see that $|P|=\Theta(n^{d-1})$. Let $P'=P\cup \{p_{1,\ldots ,1}\}$.
We show that the points in $P'$ are in convex position. We define a $(d-1)$-dimensional piecewise linear manifold $M$ as the union of $(d-1)$-dimensional convex polyhedra (facets). We show that $M$ is homeomorphic to the sphere $\mathbb{S}^{d-1}$. We also show that $M$ is the boundary of a convex polytope by verifying that the dihedral angle between any two adjacent facets is convex. It follows that $M$ is the boundary of $\conv(P')$, consequently $P'$ is in convex position.

\begin{figure}\centering
	\includegraphics{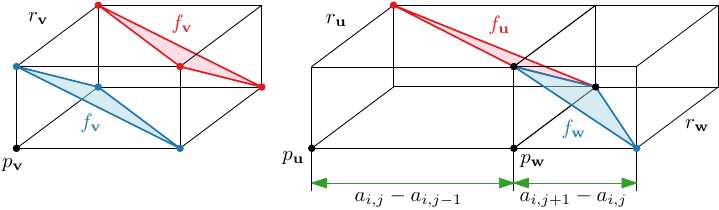}%
	\caption{Left: A rectangle $r_{\mathbf{v}}$ in $\mathbb{R}^3$, and two possible locations for $f_{\mathbf{v}}$, depending on the parity of $v_1+v_2+v_3$. (In general, the position of $f_{\mathbf{v}}$ in $r_{\mathbf{v}}$ depends on $\sum_{i=1}^d v_i\pmod{d-1}$.)
	Right: Two adjacent facets $f_{\mathbf{u}},f_{\mathbf{w}}\in \mathcal{F}$ in two adjacent hyperrectangles $r_{\mathbf{u}}$ and $r_{\mathbf{w}}$.}%
	\label{fig:tile}
\end{figure}

For every $i\in \{1,\ldots, d\}$, the points of $P'$ lying in the hyperplane $x_i=a_{i,1}$ are in convex position by induction, hence they each span a $(d-1)$-dimensional facets of $\text{conv}(P')$. For every vector $\mathbf{v}\in \{1,\ldots , n-1\}^d$, consider the hyperrectangle $r_{\mathbf{v}}= \Cross_{i=1}^d [a_{i,v_i},a_{i,v_i+1}]$, where $[a_{i,v_i},a_{i,v_i+1}]$ is the $i$-th \emph{extent} of $r_{\mathbf{v}}$. All vertices of each $r_{\mathbf{v}}$ are in $\Cross_{i=1}^d A_i$, and the hyperrectangles jointly tile the bounding box $\Cross_{i=1}^d [a_{i,1},a_{i,n}]$. For every $r_{\mathbf{v}}$, let $F_{\mathbf{v}}=\text{conv}(P\cap r_{\mathbf{v}})$, that is, the convex hull of vertices of $r_{\mathbf{v}}$ that are in $P$; see Figure~\ref{fig:tile}(left). By construction, every $F_{\mathbf{v}}$ is at most $(d-1)$-dimensional (possibly empty).
Let $\mathcal{F}$ be the set of $(d-1)$-dimensional polyhedra  $F_{\mathbf{v}}$, where $\mathbf{v}\in \{1,\ldots , n-1\}^d$, we call them \emph{facets}.
By construction, the union of the facets in $\mathcal{F}$, together with the facets in the hyperplanes $x_i=a_{i,1}$ for all $i\in\{1,\ldots, d\}$, is homeomorphic to  $\mathbb{S}^{d-1}$.
Consider two facets $f_{\mathbf{u}},f_{\mathbf{w}}\in \mathcal{F}$ that share a $(d-2)$-dimensional face; see Figure~\ref{fig:tile}(right).
Then $f_{\mathbf{u}}$ and $f_{\mathbf{w}}$ lie in two adjacent hyperrectangles, say $r_{\mathbf{u}}$ and $r_{\mathbf{w}}$, whose common boundary is $(d-1)$-dimensional,
and it is contained in a hyperplane $x_i=a_{i,j}$ for some $i\in \{1,\ldots , d\}$ and $j\in \{2,\ldots, n-1\}$.
The facet $f_{\mathbf{u}}$ (resp., $f_{\mathbf{w}}$) is parallel to the $(d-1$)-simplex spanned by the $d$ vertices of $r_{\mathbf{u}}$ (resp., $r_{\mathbf{w}}$) adjacent to $p_{\mathbf{u}}$ (resp., $p_{\mathbf{w}}$). Since $A_i$ is MD, we have $a_{i,j+1}-a_{i,j} < a_{i,j}-a_{i,j-1}$. Note that
$a_{i,j+1}-a_{i,j}$ and $a_{i,j}-a_{i,j-1}$ are the $i$-th extents of $r_{\mathbf{u}}$ and $r_{\mathbf{w}}$, respectively; and $r_{\mathbf{u}}$ and $r_{\mathbf{w}}$ have the same extent in the remaining $d-1$ coordinates. Consequently, the dihedral angle between $f_{\mathbf{u}}$ and $f_{\mathbf{w}}$ is convex, as required.
\end{proof}
Now Theorem~\ref{thm:lowerd} follows from Lemma~\ref{lem:gap-sequence} and Lemma~\ref{lem:prod}.

\section{Algorithms}\label{sec:algorithms}

In this section, we describe polynomial-time algorithms for
(i) finding convex chains and caps of maximum size; and
(ii) approximating the maximum size of a convex polygon;
where these structures are \emph{supported} by a given grid.
The main challenge is to ensure that the vertices of the convex polygon (resp., chain) have distinct $x$- and $y$-coordinates. The coordinates of a point $p \in X \times Y$ are denoted by $x(p)$ and $y(p)$.

As noted in Section~\ref{sec:intro}, efficient algorithms are available for finding a largest convex polygon or convex cap \emph{contained} in a planar point set. We briefly review these results as they provide an algorithm for the case of maximum chains supported by a grid, since the points in a convex chain have distinct $x$- and $y$-coordinates.

Given a set $P$ of $N$ points in the plane, Edelsbrunner and Guibas~\cite[Thm.~5.1.2]{EdelsbrunnerG89} use the dual line arrangement and dynamic programming to find the maximum size of a convex cap contained in \capTop\ in $O(N^2)$ time and $O(N)$ space; the same bounds hold for \capBottom, \capLeft, and \capRight. A convex cap of maximum length can be also computed in $O(N^2\log N)$ time and $O(N\log N)$ space.
Their method sweeps the dual line arrangement, in which each vertex (intersection of two dual lines) corresponds to a line segment $pq$ for $p,q\in P$, $p\neq q$. Specifically, for every pair of points $p,q\in P$, let $L(p,q)$ be the maximum length of a chain in \capTop\ whose \emph{last} two vertices are $p$ and $q$, respectively. Similarly, let $R(p,q)$ be the maximum length of a chain in \capTop\ whose \emph{first} two vertices are $p$ and $q$, respectively. The sweepline algorithm in~\cite{EdelsbrunnerG89} computes the values $L(p,q)$ (or all values $R(p,q)$) for all $p,q\in P$, $p\neq q$, in $O(N^2)$ time.

The following observation allows us to adapt the algorithm in~\cite{EdelsbrunnerG89} to find the maximum size of a convex cap in $\chainTopLeft$ (alternatively, to report a longest such chain) within the same time and space bounds. Analogous observations hold for $\chainTopRight$, $\chainBottomRight$, and $\chainBottomLeft$.

\begin{observation}\label{obs:monotone}
Let $P$ be a finite set of points in the plane, and let $p,q\in P$ such that $x(p)<x(q)$.
If $\slope(pq)>0$, then $L(p,q)$ is the maximum length of a chain in $\chainTopLeft$
whose last two vertices are $p$ and $q$.
\end{observation}

Consequently, the maximum length of a chain in $\chainTopLeft$ is $\max L(p,q)$,
where the maximum is taken over all pairs $p,q\in P$ where $x(p)<x(q)$ and $y(p)<y(q)$.
Since $x$- and $y$-coordinates do not repeat in a convex chain, we obtain the following result.

\begin{theorem}\label{thm:algo-chain}
In a given $n \times n$ grid, the maximum size of a supported convex chain can be computed in $O(n^4)$ time and~$O(n^2)$ space; and a supported convex chain of maximum size can be computed in $O(n^4\log n)$ time and~$O(n^2\log n)$ space.
\end{theorem}

\subsection{Convex caps}
\label{ssec:algoCap}	

In order to compute the maximum size of a convex cap in~$\capTop$, we must be careful to use each $y$-coordinate at most once.
We solve the more general problem of computing the maximum total size of two chains~$A\in\chainTopLeft$ and~$B\in\chainTopRight$ that use distinct~$y$-coordinates. Note that in contrast to the chains constituting a convex cap, the chains~$A$ and~$B$ may cross, or can have overlapping $x$-projections.

We present below a dynamic program to solve this problem, essentially building up solutions from bottom to top. To define the appropriate subproblem, we need to be able to refer to the last segment on the upper side of a chain. Hence, we refer to the last two vertices of $A$ as $l_1$ and $l_2$, setting~$l_2:=l_1$ if~$A$ is to consist of the single vertex~$l_1$. Similarly, we refer to the first two vertices of $B$ as $r_1$ and $r_2$, setting~$r_2:=r_1$ if~$B$ is to consist of the single vertex~$r_1$. The subproblem $C(l_1,l_2,r_1,r_2)$ then expresses the maximum total size of two chains $A \in \chainTopLeft$ and $B \in \chainTopRight$ with the given end vertices for $A$ and start vertices for $B$.
We claim that~$C(l_1,l_2,r_1,r_2)$ as defined below yields the desired quantity, or~$-\infty$ if no such chains~$A$ and~$B$ exist.
\[C(l_1,l_2,r_1,r_2) = \left\{\begin{array}{ll}
	-\infty & \text{if~$l_1\neq l_2$ and~$(l_1,l_2)\notin\chainTopLeft$, or}\\
    & \hspace{.6em}\text{$r_1\neq r_2$ and $(r_1,r_2)\notin\chainTopRight$, or}\\
	& \hspace{.5em}\text{$\{y(l_1),y(l_2)\}\cap\{y(r_1),y(r_2)\}\neq\emptyset$}\\
	2 & \text{else, if~$l_1=l_2$ and~$r_1=r_2$}\\
	L(l_1,l_2)+1
		& \text{else, if~$r_1=r_2$ and~$y(l_2)<y(r_1)$}\\
	R(r_1,r_2)+1
		& \text{else, if~$l_1=l_2$ and~$y(l_2)>y(r_1)$}\\
	\max\limits_{(v,l_1,l_2)\in\chainTopLeft \text{ or } v=l_1} C(v,l_1,r_1,r_2)+1
		& \text{else, if~$y(l_2)>y(r_1)$}\\
	\max\limits_{(r_1,r_2,v)\in\chainTopRight \text{ or } v=r_1} C(l_1,l_2,r_2,v)+1
		& \text{else,~$y(l_2)<y(r_1)$.}
\end{array}\right.\]

\begin{figure}[t]
  \centering
  \includegraphics[page=1]{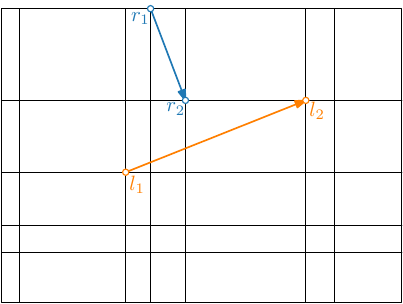}
  \includegraphics[page=3]{cap-dp}\\
  \hfill(a)\hfill(b)\hfill\hfill\hfill\\
  \includegraphics[page=4]{cap-dp}
  \includegraphics[page=5]{cap-dp}\\
  \hfill(c)\hfill(d)\hfill\hfill
  \caption{Illustration for the cases of $C(l_1,l_2,r_1,r_2)$.
  (a) Invalid configuration, as $y(r_2) = y(l_2)$.
  (b) $r_1=r_2$ is a single point above $l_2$, so we look for the longest chain ending in~$(l_1,l_2)$.
  (c) Removing the topmost point ($l_2$ in this case), testing all valid possible $v$ to find the longest chain. Note that the left and right chain may not complete to a cap -- this is checked separately.
  (d) We need test only whether $(l_1,l_2)$ and $(r_1,r_2)$ combine to make a cap (purple dotted line) to check whether the entry $(l_1,l_2,r_1,r_2)$ should be considered.}\label{fig:cap-dp}
\end{figure}

Informally, the first clause of the definition of~$C(l_1,l_2,r_1,r_2)$ gets rid of invalid arguments, for which no such chains $A\in\chainTopLeft$ and~$B\in\chainTopRight$ exist; see also Figure~\ref{fig:cap-dp}.
The second clause is the base case where $A$ and $B$ each have one vertex.
The third and fourth clause handle the case where the topmost chain consists of a single vertex---it can thus not interfere with the other (longer) chain anymore.
Here,~$L(p_1,p_2)$ (resp.,~$R(p_1,p_2)$) denotes the size of a largest convex chain $P$ in $\chainTopLeft$ (resp., $\chainTopRight$), ending (resp., starting) with vertices~$p_1$ and~$p_2$, or~$P=(p_1)$ if~$p_1=p_2$.
We can use the dynamic programming algorithm of~\cite{EdelsbrunnerG89} to compute~$L(p_1,p_2)$ and~$R(p_1,p_2)$.
The final two clauses deal with the case where the topmost chain consists of at least one edge, looking for the largest pair of chains that can be constructed by adding the topmost vertex to a `lower' pair of chains.
We prove the correctness of the formula in Lemma~\ref{lem:doubleChain}, using the following general observations.

\begin{observation}\label{obs:subsequence}
    If a supported convex polygon~$P$ is in a set~$\chainTopLeft$, $\chainTopRight$, $\chainBottomRight$, $\chainBottomLeft$, $\capTop$, $\capRight$, $\capBottom$, or~$\capLeft$, then every subsequence of~$P$ is in the same set.
    That is, these classes are hereditary.
\end{observation}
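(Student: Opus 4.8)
The plan is to reduce the claim to the elementary fact that strict monotonicity of a real sequence is inherited by every subsequence, after first checking that a subsequence of a convex polygon is itself a convex polygon of the relevant type. The whole argument is bookkeeping, so I will keep it short.

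First I would recall that each of the eight classes is defined purely by the strict monotonicity of the coordinate sequences $(x_i)_{i=1}^k$ and $(y_i)_{i=1}^k$ read in clockwise order: a single such condition for each of the four cap classes $\capTop,\capLeft,\capBottom,\capRight$, and a conjunction of two such conditions for each of the four chain classes, since $\chainTopLeft=\capLeft\cap\capTop$ and similarly for the others. Thus membership in a class is a property of the clockwise-ordered coordinate sequences alone.

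Second, I would verify that a subsequence $Q$ of $P$ is again a convex polygon whose clockwise vertex order is the one induced from $P$. This rests on the standard fact that any subset of a point set in convex position is again in convex position: if a point $p$ is a vertex of $\conv(P)$, then $p$ is not in the convex hull of the other points, so it remains a vertex of $\conv(Q)$ for any $Q$ containing it. Moreover, for points in convex position the cyclic order around the hull is fixed, so restricting to $Q$ preserves the relative clockwise order. Consequently the coordinate sequences of $Q$ are honest subsequences of the coordinate sequences of $P$.

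Finally, I would invoke the observation that if a finite real sequence is strictly increasing (resp. strictly decreasing), then so is every one of its subsequences. Applying this to whichever one or two monotonicity conditions define the class of $P$, each such condition persists for $Q$; hence $Q$ lies in the same cap or chain class as $P$. The only step meriting a moment of care is the second one, namely confirming that the selected vertices still form a convex polygon with an inherited clockwise order, as this is precisely what lets us treat the coordinate sequences of $Q$ as subsequences of those of $P$; there is otherwise no genuine obstacle.
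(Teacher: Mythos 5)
Your proposal is correct: the paper states this Observation without any proof, treating it as self-evident, and your argument supplies exactly the routine verification it leaves implicit (a subset of points in convex position is in convex position with the induced clockwise order, and strict monotonicity of the coordinate sequences passes to subsequences). Since each of the eight classes is defined purely by such monotonicity conditions on the clockwise vertex listing, nothing more is needed, and your identification of the induced-order step as the only point requiring care is apt.
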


\begin{lemma}\label{lem:concatenate}
    Let $A = (a_1, \ldots, a_k)$ and $B = (b_1, \ldots, b_{k'})$ with $k, k' \geq 2$, $a_{k-1} = b_1$ and $a_k = b_2$.
    If $A$ and $B$ are both in the same set of $\chainTopRight$, $\chainTopLeft$, $\chainBottomRight$, or $\chainBottomLeft$, then their concatenation $D = (a_1, \ldots, a_k, b_3, \ldots, b_{k'})$ is in the same set.
\end{lemma}
\begin{proof}
Consider the case that $A$ and $B$ are in $\chainTopLeft$; the other cases are analogous. This means that both $x$-coordinates and their $y$-coordinates strictly increase in $A$ and $B$. As $a_k = b_2$, we know that $x(a_k) < x(b_i)$ and $y(a_k) < y(b_i)$ for all $i > 2$. Therefore, the concatenation $D = (a_1, \ldots, a_k, b_3, \ldots, b_{k'}$ has strictly increasing $x$- and $y$-coordinates. Except for its endpoints, each vertex of $D$ is an interior vertex of $A$ or $B$. Since $A$ and $B$ are convex, this readily implies that all interior vertices of $D$ are convex and hence $D$ is convex. Thus, $D$ is in $\chainTopLeft$ as well.
\end{proof}

\begin{lemma}\label{lem:doubleChain}
    Let~$k$ be the maximum total size of two chains~$A\in\chainTopLeft$ and~$B\in\chainTopRight$ that use distinct~$y$-coordinates, subject to the constraints that~$A$ ends in a given edge~$(l_1,l_2)$ (or~$A=(l_1)$ if~$l_1=l_2$) and~$B$ starts in a given edge~$(r_1,r_2)$ (or~$B=(r_1)$ if~$r_1=r_2$), or let~$k=-\infty$ if no such chains exists.
    Then~$C(l_1,l_2,r_1,r_2)=k$.
\end{lemma}
\begin{proof}
    Suppose that~$A\in\chainTopLeft$ and~$B\in\chainTopRight$ realize the maximum total size over all chains with distinct~$y$-coordinates under the given constraints.
    If the~$y$-coordinates of~$\{l_1,l_2\}$ and~$\{r_1,r_2\}$ are not disjoint, then~$A$ and~$B$ do not use distinct~$y$-coordinates, so~$C(l_1,l_2,r_1,r_2)$ correctly returns~$-\infty$.
    So suppose that the~$y$-coordinates differ.
    If~$l_1=l_2$ and~$r_1=r_2$, then~$A$ and~$B$ both have size~$1$, and we correctly return~$2$.

    If~$l_1\neq l_2$ and~$A\in\chainTopLeft$, then by Observation~\ref{obs:subsequence} we must have~$(l_1,l_2)\in\chainTopLeft$, so we correctly return~$-\infty$ if~$(l_1,l_2)\not\in\chainTopLeft$.
    Similarly, if~$r_1\neq r_2$ and~$(r_1,r_2)\not\in\chainTopRight$, we correctly return~$-\infty$.

    For the remaining cases, the point of~$\{l_1,l_2,r_1,r_2\}$ with maximum~$y$-coordinate will be either~$l_2$ or~$r_1$.
    If~$y(l_2)<y(r_1)$ and~$r_1=r_2$, then~$B$ has size~$1$ and~$A$ has only~$y$-coordinates at most that of~$l_2$.
    Since~$y(l_2)<y(r_1)$, any chain of~$\chainTopLeft$ that ends in~$(l_1,l_2)$ will have~$y$-coordinates distinct from~$B$, so~$A$ has the maximum size over all chains of~$\chainTopLeft$ ending in~$(l_1,l_2)$, as given by~$L(l_1,l_2)$. Therefore we correctly return~$L(l_1,l_2)+1$.
    Symmetrically, if~$y(r_1)<y(l_2)$ and~$l_1=r_2$, we correctly return~$R(r_1,r_2)+1$.

    Two (symmetric) cases remain: either~$y(l_2)<y(r_1)$ and~$r_1\neq r_2$, or~$y(r_1)<y(l_2)$ and~$l_1\neq l_2$.
    We use induction on the maximum~$y$-coordinate to prove that~$C(l_1,l_2,r_1,r_2)=k$ in these cases.
    We show the case where~$y(l_2)<y(r_1)$ and~$r_1\neq r_2$, the argument for the other case is symmetric.
    Removing the point~$r_1$ from chain~$B$ results in a chain~$B'\in\chainTopRight$ of size one less than that of~$B$.
    The chain~$B'$ starts in~$r_2$, and its~$y$-coordinates are still disjoint from those of~$A$.
    Since~$B$ had at least one edge, $B'$ consists either of a single vertex (namely~$r_2$), or it starts with an edge~$(r_2,v)$ for which~$(r_1,r_2,v)\in\chainTopRight$.
    Since these are exactly the terms we take the maximum over in the last clause of the definition of~$C(l_1,l_2,r_1,r_2)$, we by induction return at least~$k$.

    It remains to show that,~$C(l_1,l_2,r_1,r_2)\leq k$.
    For this, let~$A"\in\chainTopLeft$ and~$B"\in\chainTopRight$ be two chains that realize the maximum total size over all pairs of chains with distinct~$y$-coordinates, where~$A"$ ends in~$(l_1,l_2)$ and~$B"$ consists either of the single vertex~$r_2$, or it starts with an edge~$(r_2,v)$ for which~$(r_1,r_2,v)\in\chainTopRight$ (if no such chains exist, we correctly return~$-\infty+1=-\infty$).
    It suffices to show that adding~$r_1$ in front of chain~$B"$ yields a chain of~$\chainTopRight$ with~$y$-coordinates distinct from those of~$A"$.
    Indeed, since~$y(r_1)$ is greater than all other~$y$-coordinates, the $y$-coordinates remain distinct.
    Moreover, if~$B"$ consists of the single vertex~$r_2$, adding~$r_1$ in front of it yields~$(r_1,r_2)$, which we have already verified to lie in~$\chainTopRight$ (we would have returned~$-\infty$ otherwise).
    If~$B"\in\chainTopRight$ instead starts in an edge~$(r_2,v)$ for which~$(r_1,r_2,v)\in\chainTopRight$, then the chain with~$r_1$ in front also lie in~$\chainTopRight$, by Lemma~\ref{lem:concatenate}.
    Hence,~$C(l_1,l_2,r_1,r_2)$ returns at most $k$, and hence exactly~$k$.
\end{proof}

\begin{lemma}
    The value of $C$ for all inputs can be computed in~$O(n^{10})$ time and~$O(n^8)$ space.
\end{lemma}
\begin{proof}
    Deciding the applicable clause for an input takes constant time.
    Each clause takes at most~$O(n^2)$ time to compute, assuming that the referenced terms~$C$,~$L$, and~$R$ are already computed.
    For the third and fourth clauses, we can precompute all~$L(l_1,l_2)$ and~$R(r_1,r_2)$ within the desired time and space.
    The last two clauses of the equation reference~$O(n^2)$ inputs of~$C$ recursively, but with a smaller maximum~$y$-coordinate, so this recurrence is well-defined.
    There are~$O(n^8)$ possible inputs to~$C$, each of which takes constant additional space and~$O(n^2)$ time to compute, which gives the desired bounds.
\end{proof}

Any cap (of size at least~$2$) can be split into two chains~$A\in\chainTopLeft$ and~$B\in\chainTopRight$ with distinct~$y$-coordinates by removing the topmost edge.
Since unless~$n=1$, the largest cap has size at least~$2$, we can compute the size of the largest cap using~$C$ by taking the maximum over all inputs for whose concatenation (omitting~$l_1$ if~$l_1=l_2$ and~$r_2$ if~$r_1=r_2$) lies in~$\capTop$, as shown in Lemma~\ref{lem:cap}.

\begin{lemma}\label{lem:cap}
Suppose that~$A=(l_1,\dots,l_a)\in\chainTopLeft$ and~$B=(r_1,\ldots,r_b)\in\chainTopRight$ use distinct~$y$-coordinates, and let~$D$ be the concatenation of the last two vertices (or single vertex if~$A$ has size~$1$) of~$A$ and the first two vertices of~$B$ (or single vertex if~$B$ has size~$1$).
If~$D\in\capTop$, then the concatenation $D' = (l_1, \dots, l_a, r_1, \dots, r_b)$ of~$A$ and~$B$ lies in~$\capTop$.
\end{lemma}
\begin{proof}
Since $D\in\capTop$, we have $x(l_a)<x(r_1)$, so the $x$-coordinates of $D'$ are strictly increasing, and the $y$-coordinates are distinct.
Except $l_1$ and $r_b$, every vertex of $D'$ is a non-endpoint vertex of $A$, $B$ or $D$. Since these are convex, the vertex is convex in $D'$ as well; the entire concatenation is convex as a result. Thus, $D'$ lies in~$\capTop$.
\end{proof}

Testing if an input of~$C$ results in a cap takes constant time, so Theorem~\ref{thm:algo-cap} follows.
\begin{theorem}\label{thm:algo-cap}
For a given $n \times n$ grid, a supported convex cap of maximum size can be computed in $O(n^{10})$ time and $O(n^8)$ space.
\end{theorem}

\subsection{Convex \texorpdfstring{$n$}{n}-chains and \texorpdfstring{$n$}{n}-caps}
\label{ssec:ngons}

If we are solely interested in deciding whether the grid $X \times Y$, where $|X|=|Y|=n$, supports a convex chain or cap with precisely $n$ vertices, we can improve upon the previous algorithms considerably. Let~$X=\{x_1,\dots,x_n\}$ and~$Y=\{y_1,\dots,y_n\}$ with~$x_i<x_{i+1}$ and~$y_i<y_{i+1}$.
To test whether there is a chain of size $n$ in $\chainTopLeft$, it suffices to test whether the chain~$((x_1,y_1),\dots,(x_n,y_n))$ is in~$\chainTopLeft$, in $O(n)$ time.

To test whether there is a supported convex cap of size $n$ in~$\capTop$, we adapt the algorithm of Theorem~\ref{thm:algo-cap}.
Suppose~$P$ is a cap of~$\capTop$ of size~$n$, with $A_P$ and $B_P$ its maximal components in $\chainTopLeft$ and in $\chainTopRight$ respectively.
Then~$P$ uses all coordinates of~$X$, which restricts the types of chains~$A_P$ and~$B_P$ considerably.
In particular~$C(l_1,l_2,r_1,r_2)$ can be modified to consider only edges~$l$ and~$r$ that use consecutive~$x$-coordinates.

For~$1<k<n$ consider the subchains~$A_k\in\chainTopLeft$ and~$B_k\in\chainTopRight$ of~$A_P$ and~$B_P$ consisting only of vertices with~$y$-coordinate at most~$y_k$.
These chains have length~$k$ in total and use all of the coordinates~$\{y_1,\dots,y_k\}$.
Let~$(l_1,l_2)$ be the last edge of~$A_k$ and let~$(r_1,r_2)$ be the first edge of~$B_k$.
Then the coordinates~$y_{k-1}$ and~$y_k$ are used by~$l$, or by~$r$, or by~$l_2$ and~$r_1$.
Moreover, since the total length of~$A_k$ and~$B_k$ is~$k$, there are~$n-k$ unused~$X$-coordinates between~$l_2$ and~$r_1$, so if~$l_2.x=x_i$, then~$r_1.x=x_{i+n-k+1}$.
So for a fixed value of~$k$, we need only consider~$O(|Y|^2|X|)$ inputs for~$C(l_1,l_2,r_1,r_2)$.
Moreover, the recursive calls in the last two cases need only consider~$O(|Y|)$ values of~$v$.

This implies that there are $O(|Y|^3|X|)$ possible inputs to~$C(l_1,l_2,r_1,r_2)$ over all $k$.
As an entry now depends on $O(|Y|)$ subproblems and each is evaluated in constant time, the corresponding values can be computed in~$O(|Y|^4|X|)=O(n^5)$ time and~$O(n^4)$ space.
Similarly, we can test whether there is a cap of size $n$ in~$\capTop$ within the same time and space.

\subsection{Approximations} 

Although computing the maximum size of a supported convex polygon remains elusive,
we can easily devise a constant-factor approximation algorithm by eliminating duplicate coordinates as follows.
Compute a maximum size convex polygon~$P$ (possibly with duplicate coordinates) in a given $n\times n$ grid in~$O(n^6)$ time and $O(n^2)$ space~\cite[Thm.~5.1.3]{EdelsbrunnerG89}. Define a \emph{conflict graph} on the vertices of $P$, where two vertices are \emph{in conflict} if they share an $x$- or $y$-coordinate. Since each conflict corresponds to a horizontal or vertical line, the conflict graph has maximum degree at most 2 and contains no odd cycles, hence it is bipartite. One of the two sets in the bipartition contains at least half of the vertices of $P$ without duplicate coordinates, and so it determines a supported convex polygon. Since $P$ has $O(n)$ vertices, the conflict graph can be computed in $O(n)$ time. Overall, we obtain a $\frac12$-approximation for the maximum supported convex polygon in $O(n^6)$ time and $O(n^2)$ space.
The same strategy provides an $\frac12$-approximation for the maximum supported polygon in $\capTop$, $\capBottom$, $\capLeft$, and $\capRight$ in~$O(n^4)$ time and $O(n^2)$ space using~\cite[Thm.~5.1.2]{EdelsbrunnerG89}.
%


\section{The maximum number of convex polygons} \label{sec:counting}

Let $\Fn(n)$ be the maximum number of convex polygons that can be present in an $n \times n$ grid, with no restriction on the number of times each coordinate is used. Let $\Gn(n)$ be this number where all $2n$ grid lines are used (\ie, each grid line contains at least one vertex of the polygon). Let $\FSn(n)$ and $\GSn(n)$ be the corresponding numbers where each grid line is used at most once (so $\FSn(n)$ counts the maximum number of supported convex polygons). By definition, we have $\Fn(n) \geq \Gn(n) \geq \GSn(n)$ and $\Fn(n) \geq \FSn(n) \geq \GSn(n)$ for all $n \geq 2$. We prove the following theorem, in which the $\Theta^*(.)$ notation hides polynomial factors in $n$.

\begin{theorem}\label{thm:count}
 The following bounds hold:
\[
\Fn(n) = \Theta^*(16^n), \hspace{1cm}
\FSn(n) = \Theta^*(9^n), \hspace{1cm}
\Gn(n) = \Theta^*(9^n), \hspace{1cm}
\GSn(n) = \Theta^*(4^n).
\]
\end{theorem}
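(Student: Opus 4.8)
The plan is to establish four separate $\Theta^*$ bounds, each requiring a matching upper and lower bound, so the proof naturally splits into eight pieces organized around the four quantities $\Fn(n), \FSn(n), \Gn(n), \GSn(n)$. The unifying idea is that a convex polygon in a grid decomposes into four convex chains (one per type in $\{\chainTopLeft,\chainTopRight,\chainBottomRight,\chainBottomLeft\}$), meeting at the four extreme vertices (leftmost, topmost, rightmost, bottommost). So I would first count the number of convex chains of a single type and then assemble the four chains, paying a polynomial overhead for choosing the $O(1)$-many junction vertices and gluing the pieces; the $\Theta^*$ notation absorbs these polynomial factors. Since $16 = 4^2$, $9 = 3^2$, and $4 = 2^2$, I expect each base to arise as the square of a per-coordinate branching factor, consistent with the fact that combining the horizontal and vertical coordinate choices multiplies.

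For the \textbf{lower bounds}, I would exhibit an explicit grid (for $\FSn, \GSn$, the symmetric grid $Y$ of Lemma~\ref{lem:upperGeneral}, or an MD-type grid, is a natural candidate) in which one can freely and independently make a binary/ternary/quaternary choice at each of the $n$ coordinate levels, so that distinct choices yield distinct convex polygons. Concretely, for $\GSn(n)=\Omega^*(4^n)$ I would build a grid where at each level one chooses independently whether that $x$-value and that $y$-value participate in the top or bottom chain, yielding roughly $2^n \cdot 2^n = 4^n$ supported polygons; relaxing the supportedness ($\GSn \to \Gn$) and the grid-line-usage constraints ($\Gn \to \Fn$, $\GSn \to \FSn$) increases the per-level branching from $2$ to $3$ or $4$, producing $9^n$ and $16^n$. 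The key is to argue injectivity of the choice-to-polygon map.

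For the \textbf{upper bounds}, I would bound the number of convex chains of one type by a product over coordinate levels of a constant branching factor. A clean way is to encode a convex chain by how it uses each grid line: by convexity (monotone slopes), once the set of used $x$- and $y$-coordinates is fixed, the chain is essentially determined, so the count reduces to choosing, per coordinate, one of a bounded number of states (used-on-top / used-on-bottom / unused, etc.). Multiplying the four chain-counts and the $O(1)$ junction choices, then taking the dominant term, gives $O^*(16^n)$ for $\Fn$ and the smaller bases for the constrained versions, where the stricter "used at most once" or "all lines used" conditions cut the number of per-level states from $4$ to $3$ or $2$.

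The \textbf{main obstacle} I anticipate is the upper bound for $\Fn(n)=O^*(16^n)$, because without any restriction a single coordinate line may carry several vertices, so the simple "one state per line" encoding breaks down and one must argue that the extra multiplicity contributes only a polynomial (not exponential) factor — presumably because collinear vertices on a line, together with convexity, are highly constrained. Tightly matching the base $16$ (rather than some larger constant) will require a careful per-line state analysis and a proof that the four chains can be counted independently up to polynomial overhead. The lower bounds, by contrast, should be comparatively routine once the right extremal grid is chosen.
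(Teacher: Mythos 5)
Your upper-bound mechanism is essentially the paper's: one label per grid line from a constant alphabet (for $\Fn$: unused / one vertex on the left resp.\ bottom cap / one vertex on the other cap / two vertices; drop ``unused'' for $\Gn$, drop ``two vertices'' for $\FSn$, drop both for $\GSn$), plus the location of the leftmost vertex, together with the reconstruction fact that a convex chain is determined by the set of lines its vertices use. Two caveats. First, taken literally, ``multiplying the four chain-counts'' is wrong: a single chain type already admits roughly $\binom{2n}{n}=\Theta^*(4^n)$ realizations, so four independent factors give $\Theta^*(256^n)$; what saves the bound --- and what you also describe --- is that the four chains are coupled through a \emph{single} label per line, so the per-line alphabet, not a per-chain count, is what gets exponentiated. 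Second, your anticipated ``main obstacle'' for $\Fn$ dissolves: a line meets a strictly convex polygon in at most two vertices (three collinear vertices would force an interior angle of $\pi$), so multiplicity is one extra label, not an unbounded problem; that issue is genuine only for weakly convex polygons, which the paper treats separately in Theorem~\ref{thm:W}.

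The genuine gap is in the lower bounds, which you declare ``comparatively routine once the right extremal grid is chosen'': constructing that grid is the heart of the matter, and neither candidate you name works. The grid of Lemma~\ref{lem:upperGeneral} is the worst possible choice --- by that very lemma it contains at most $4(\lceil\log n\rceil+1)$ points in convex position, hence at most $n^{O(\log n)}$ convex polygons in total, which is subexponential. A generic MD grid also fails: with $X=\{1,\dots,n-1\}$ and the $2$-MD set $y_i=2^i$, the points $(1,y_1),(2,y_2),(n-1,y_3)$ have increasing coordinates but make a right turn, so independent per-line choices do not all produce convex polygons. The free-choice property you need --- \emph{every} monotone selection of lines yields a convex chain --- requires consecutive $y$-differences to grow by a factor comparable to the diameter of $X$; the paper takes $y_i=n^i$, so that the slope between two points whose higher $y$-level is $k$ lies in $[n^{k-1},n^k)$, and these intervals are disjoint, forcing convexity of any monotone selection. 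On top of that, the four chains must close up into a polygon for every combination of choices: the paper reflects $Y^-$ through its maximum to form the top half, duplicates the two median lines, and pins the four extreme vertices onto them. Only then do the counts $\binom{m}{m/2}^4=\Theta^*(4^n)$ for $\GSn$ and $\binom{m}{m/3}^4\binom{2m/3}{m/3}^4=\Theta^*(9^n)$ for $\Gn$ follow, with $\FSn=\Omega^*(9^n)$ and $\Fn=\Omega^*(16^n)$ obtained by first choosing a subgrid (of $2m/3$, resp.\ $3m/4$, lines per side of each median) and invoking the previous two bounds --- a derivation your ``increase the branching from $2$ to $3$ or $4$'' summary gestures at but does not supply.
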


\subsection{Upper bounds}

We first prove that $\Fn(n) = O(n \cdot 16^n)$ by encoding each convex polygon in a unique way, so that the total number of convex polygons is bounded by the total number of encodings. Recall that a convex polygon $P$ can be decomposed into four convex chains $\chainTopLeft_P, \chainBottomLeft_P, \chainBottomRight_P, \chainTopRight_P$, with only extreme vertices of $P$ appearing in multiple chains. Let $\capLeft_P = \chainTopLeft_P \cup \chainBottomLeft_P$ and $\capBottom_P = \chainBottomLeft_P \cup \chainBottomRight_P$. To encode $P$, we assign the following number to each of the $2n$ grid lines $\ell$ (see Figure~\ref{fig:f5} for an example): 0 if $\ell$ is not incident on any vertex of $P$, 3 if $\ell$ is incident on multiple vertices of $P$, 1 if $\ell$ is incident on one vertex of $P$ and that vertex lies on $\capLeft_P$ if $\ell$ is horizontal, or on $\capBottom_P$ if $\ell$ is vertical, and 2 otherwise.

\begin{figure}[b]
\centering
\includegraphics{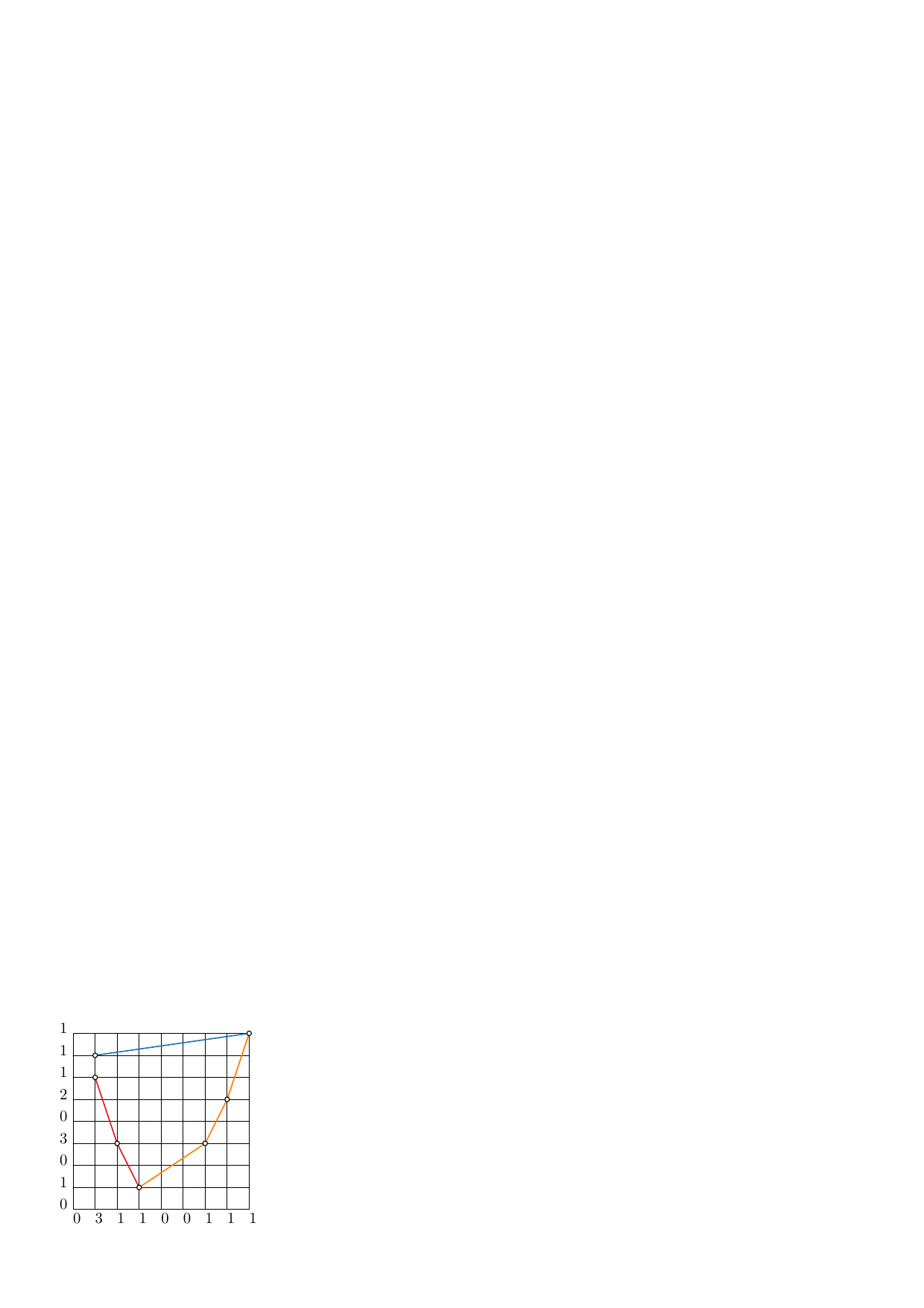}
\caption{Encoding the grid lines.}
\label{fig:f5}
\end{figure}

In addition, we record the index of the horizontal line containing the leftmost vertex of $P$ (pick the topmost of these if there are multiple leftmost points).

Since each of the $2n$ grid lines is assigned one of $4$ possible values, and there are $n$ horizontal lines, the total number of encodings is $O(n \cdot 4^{2n}) = O(n \cdot 16^n)$. All that is left to show is that each encoding corresponds to at most one convex polygon.

First, observe that if $P$ is a convex chain, say in $\chainTopLeft$, then the set of grid lines containing a vertex of $P$ uniquely defines $P$: since both coordinates change monotonically, the $i$-th vertex of $P$ must be the intersection of the $i$-th horizontal and vertical lines. So all we need to do to reconstruct $P$ is to identify the set of lines that make up each convex chain.

Since we know the location of the (topmost) leftmost vertex of $P$, we know where $\chainTopLeft_P$ starts. Every horizontal line above this point labelled with a $1$ or $3$ must contain a vertex of $\chainTopLeft_P$; let $k$ be the number of such lines. Since the $x$-coordinates are monotonic as well, $\chainTopLeft_P$ ends at the $k$-th vertical line labelled with a $2$ or $3$. The next chain, $\chainTopRight_P$, starts either at the end of $\chainTopLeft_P$, if the horizontal line is labelled with a $1$, or at the intersection of this horizontal line with the next vertical line labelled with a $2$ or $3$, if this horizontal line is labelled with a $3$. We can find the rest of the chains in a similar way. Thus, $\Fn(n) = O(n \cdot 16^n)$.

The upper bounds for $\FSn(n)$, $\Gn(n)$, and $\GSn(n)$ are analogous, except that certain labels are excluded. For the number of supported convex polygons $\FSn(n)$, each grid line is used at most once, which means that the label $3$ cannot be used. Thus, $\FSn(n) = O(n \cdot 3^{2n}) = O(n \cdot 9^n)$. Similarly, for $\Gn(n)$, all grid lines contain at least one vertex of the polygon, so the label $0$ cannot be used. Therefore $\Gn(n) = O(n \cdot 3^{2n}) = O(n \cdot 9^n)$. Finally, for $\GSn(n)$, every grid line contains exactly one vertex of the polygon, so neither $0$ nor $3$ can be used as labels. This gives $\GSn(n) = O(n \cdot 2^{2n}) = O(n \cdot 4^n)$ possibilities.

\subsection{Lower bounds}
\label{ssec:counting-lb}

\begin{figure}[b]
\centering
\includegraphics{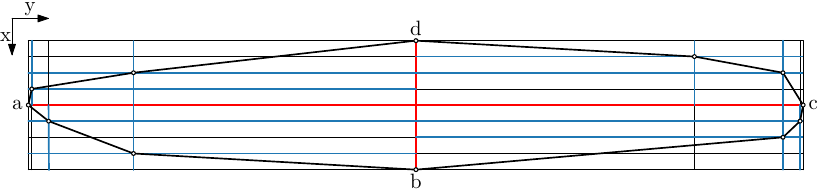}
\caption{
The $n\times n$ grid defined in Section~\ref{ssec:counting-lb},
with $n=9 =2m+3$ for $m=3$, before doubling the median lines (red).
The segments (parts of grid lines) incident to vertices are drawn in blue.
}
\label{fig:f1}
\end{figure}%

Assume that $n = 2m+4$, where $m \in \mathbb{N}$ satisfies suitable divisibility conditions, as needed.
All four lower bounds use the same grid, constructed as follows (see Figure~\ref{fig:f1}).
\begin{align*}
 X~~  &= \{1, \dots, n-1\} & Y^- &= \{y_1, \dots, y_{m+2}\}\text{, where }y_i = n^i\\
 Y~~  &= Y^- \cup Y^+      & Y^+ &= \{z_1, \dots, z_{m+2}\}\text{, where }z_i = 2 \cdot y_{m+2} - y_i
\end{align*}
Note that this results in an $(n-1) \times (n-1)$ grid, since $y_{m+2} = z_{m+2}$. To obtain an $n \times n$ grid, we duplicate the median grid lines in both directions and offset them by a sufficiently small  distance~$\varepsilon>0$. The resulting grid has the property
that any three points $p$, $q$, $r$ in the lower half $X \times Y^-$ with $x(p) < x(q) < x(r)$ and $y(p) < y(q) < y(r)$ make a left turn at $q$. To see this, suppose that $y(p) = n^i$, $y(q) = n^j$, and $y(r) = n^k$, for some $1 \leq i < j < k \leq n$. Then the slope of $
pq$ is strictly smaller than the slope of $qr$, since
\[
 \slope(qr) = \frac{n^k - n^j}{x(r) - x(q)} \geq \frac{n^{j+1} - n^j}{n - 1} = n^j >
 \frac{n^j - n^i}{1} \geq \frac{n^j - n^i}{x(q) - x(p)} = \slope(pq).
\]
Thus, any sequence of points with increasing $x$- and $y$-coordinates in the lower half is in~$\chainBottomRight$. By symmetry, such a sequence in the upper half $X \times Y^+$ is in~$\chainTopLeft$. Analogously, points with increasing $x$-coordinates and decreasing $y$-coordinates are in~$\chainBottomLeft$ if they are in the lower half and $\chainTopRight$ if they are in the upper half.

We first derive lower bounds on $\GSn(n)$ and $\Gn(n)$ by constructing a large set of convex polygons that use each grid line at least once. Then we use these bounds to derive the bounds on $\FSn(n)$ and $\Fn(n)$. The polygons we construct all share the same four extreme vertices, which lie on the intersections of the grid boundary with the duplicated median grid lines. Specifically, the leftmost and rightmost vertices are the intersections of the duplicate horizontal medians with the left and right boundary, and the highest and lowest vertices are the intersections of the duplicate vertical medians with the top and bottom boundary. Since each of these median lines now contain a vertex, we can choose additional vertices from the remaining $2m$ grid lines in each direction.

To construct each polygon, select $m/2$ vertical grid lines left of the median to participate in the bottom chain, and do the same right of the median. Likewise, select $m/2$ horizontal grid lines above and below the median, respectively, to participate in the left chain. The remaining grid lines participate in the other chain (top or right). This results in a polygon with $m/2$ vertices in each quadrant of the grid (excluding the extreme vertices). The convexity follows from our earlier observations. The total number of such polygons is

\[
 \binom{m}{\frac{m}{2}}^4 = \Theta\left( \left( m^{-\frac{1}{2}} 2^m \right)^4 \right)=
\Theta(m^{-2} 2^{4m})= \Theta(n^{-2} 2^{2n})=
\Theta(n^{-2} 4^n) = \Theta^*(4^n).
\]
The first step uses the following estimate, which can be derived from Stirling's formula for the factorial~\cite{dumitrescu2013bounds}. Let $0 < \alpha < 1$, then

\[
 \binom{n}{\alpha n}= \Theta(n^{-\frac{1}{2}} 2^{H(\alpha)n})\text{, where }H(\alpha)=-\alpha \log_2 \alpha - (1-\alpha) \log_2 (1-\alpha).
\]

For the lower bound on $\Gn(n)$, the only difference is that we now allow grid lines to contain vertices in two chains. We obtain a maximum when we divide the grid lines evenly between the three groups (bottom chain, top chain, both chains). Thus, we select $m/3$ vertical grid lines left of the median to participate in the bottom chain, another $m/3$ to participate in the top chain and the remaining $m/3$ participate in both. We repeat this selection to the right of the median and on both sides of the median horizontal line. As before, this results in a convex polygon with the same number of vertices in each quadrant of the grid---exactly $2m/3$ this time. The number of such polygons is

\vspace{-\baselineskip}
\begin{align*}
 \binom{m}{\frac{m}{3}}^4\binom{\frac{2m}{3}}{\frac{m}{3}}^4
 &= \Theta\left( \left( m^{-\frac{1}{2}} 2^{H(\frac{1}{3})m} \cdot m^{-\frac{1}{2}} 2^{H(\frac{1}{2})\frac{2m}{3}} \right)^4 \right)\\
 &= \Theta\left(m^{-4} 2^{4m(\log_2 3 - \frac{2}{3} + \frac{2}{3})}\right)
 = \Theta\left(n^{-4} 2^{2n\log_2 3}\right)
 = \Theta\left(n^{-4} 9^{n}\right)
 = \Theta^*\left(9^{n}\right).
\end{align*}

To translate these bounds to bounds on $\FSn(n)$ and $\Fn(n)$, where some grid lines may contain no vertices of the polygon, we observe that the arguments for the bounds above also work for a subgrid of $X \times Y$, provided that the subgrid includes the boundary and medians and has the same number of grid lines on each side of the median in both directions. For $\FSn(n)$, we select $2m/3$ grid lines on each side of each median (balancing the number of vertices with the two different chains) to make up our subgrid and plug in the bound on $\GSn(n)$, which yields

\[
 \binom{m}{\frac{2m}{3}}^4 \Omega^*(4^{\frac{2n}{3}})
 = \Omega^*(2^{2n(H(\frac{2}{3}) + \frac{2}{3})})
 = \Omega^*(2^{2n\log_2 3})
 = \Omega^*(9^n).
\]
Finally, for the bound on $\Fn(n)$, we select $3m/4$ grid lines on each side of each median (balancing the number of vertices with the three different options for a grid line in the proof of $\Gn(n)$), to make up our subgrid and plug in the bound on $\Gn(n)$, giving

\[
 \binom{m}{\frac{3m}{4}}^4 \Omega^*(9^{\frac{3n}{4}})
 = \Omega^*(2^{2n(H(\frac{3}{4}) + \frac{3}{4} \log_2 3)})
 = \Omega^*(2^{4n})
 = \Omega^*(16^n).
\]

\subsection{The maximum number of weakly convex polygons} \label{sec:maximum-weakly}

Let $W(n)$ denote the maximum number of weakly convex polygons that contained in an $n \times n$ grid. A polygon $P$ in $\mathbb{R}^2$ is \emph{weakly convex} if all of its internal angles are less than or equal to $\pi$. Here we identify each polygon by its set of boundary vertices, so different polygons may have identical convex hulls.
Since $W(n) \geq F(n)$, we have $W(n) = \Omega^*(16^n)$.
In fact, a slightly better lower bound trivially holds even for the $n\times n$
section of the integer lattice $Z_0=[n]\times [n]$. Consider all polygons
whose vertices are the four extreme vertices of $Z_0$, and an arbitrary subset of
the remaining $4n-8$ grid points in $\partial \text{conv}(Z_0)$. There are
$2^{4n-8}=\Omega(16^n)$ such subsets, and the lower bound $W(n) =\Omega(16^n)$ follows.

To show that $W(n) =O^*(16^n)$, we modify the previous encoding used to show
that $F(n) =O^*(16^n)$. While the four grid lines along the boundary of the bounding box of $P$
can be incident to arbitrarily many vertices, we still use at most two vertices
for each such line, namely at most two extreme vertices.
For each weakly convex polygon $P$, record the at most $8$ extreme vertices
incident to $\partial B$ together with a vector (sequence) of length $2n$:
$n$ elements corresponding to the horizontal lines (from the lowest to the highest),
and $n$ elements corresponding to the vertical lines (from left to right).
As previously, we encode each grid line by an element of $\{0,1,2,3\}$,
where $3$ stands for a line incident to at least two vertices.
By (weak) convexity, a grid line can be incident to $3$ or more vertices of $P$
only if it is one of the four lines along the bounding box of $P$.

From the recorded information, we can reconstruct a weakly convex polygon in the $n \times n$ grid.
Consequently, the number of convex polygons in the grid is bounded from above
by the number of encodings, namely $W(n)= O(n^{8} \cdot 4^{2n}) = O(n^{8} \cdot 16^n)=O^*(16^n)$.
We summarize the bounds we have obtained in the following.

\begin{theorem} \label{thm:W}
Let $W(n)$ denote the maximum number of weakly convex polygons that can be present in an
$n \times n$ grid. Then $W(n)= \Omega(16^n)$ and $W(n)= O^*(16^n)$.
\end{theorem}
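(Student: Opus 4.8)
The plan is to prove Theorem~\ref{thm:W} in two halves, a lower bound and an upper bound, both of which largely recycle machinery already developed for $\Fn(n)$ in the preceding subsections. The statement is actually quite gentle compared with the matching-constant bounds of Theorem~\ref{thm:count}, because the $\Theta^*$ slack absorbs polynomial errors, so the work is conceptual rather than computational.

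For the lower bound $W(n)=\Omega(16^n)$, the cleanest route is the one already sketched in the text: rather than invoking $W(n)\geq\Fn(n)$ (which only gives $\Omega^*(16^n)$), I would exhibit the bound directly on the integer grid $Z_0=[n]\times[n]$. Fix the four corners of $Z_0$ as permanent vertices, and observe that the boundary of $\conv(Z_0)$ contains $4n-8$ further lattice points along its four sides. Any subset of these points, together with the four corners, is the vertex set of a weakly convex polygon, since collinear boundary points only ever create straight angles of exactly $\pi$, which weak convexity permits. As the $4n-8$ points can be included or excluded independently, this produces $2^{4n-8}=\Omega(16^n)$ distinct weakly convex polygons, establishing the lower bound with no hidden polynomial factor at all.

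For the upper bound $W(n)=O^*(16^n)$, I would adapt the encoding argument used for $\Fn(n)=O(n\cdot 16^n)$. The obstacle is precisely the feature that defines weak convexity: a grid line may now carry three or more vertices, so the clean four-valued labelling of grid lines no longer determines the polygon uniquely, and a naive count would blow up. The key observation that rescues the argument is that, by weak convexity, a line incident to three or more vertices of $P$ \emph{must} be one of the four lines bounding the axis-aligned bounding box $B$ of $P$; every other line still carries at most two vertices and is still encoded by one of the labels $\{0,1,2,3\}$ exactly as before. I would therefore keep the length-$2n$ label vector over $\{0,1,2,3\}$, contributing $4^{2n}=16^n$, and augment it with a record of the (at most $8$) extreme vertices lying on $\partial B$, which costs a multiplicative factor polynomial in $n$ (at most $n^8$). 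The reconstruction then proceeds chain by chain as in the $\Fn$ argument, with the extra recorded extreme data resolving the ambiguity on the four boundary lines.

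The main thing to verify carefully is the reconstruction step: one must argue that from the label vector plus the recorded extreme vertices, the vertex set on each non-boundary grid line, and hence the whole polygon, is uniquely determined. This reduces to the same monotone-chain recovery used for $\Fn(n)$, now restricted to the interior lines where each chain still meets a line in at most one vertex, with the boundary lines handled separately using the explicitly stored extreme points. Once uniqueness holds, the number of weakly convex polygons is bounded by the number of encodings, $O(n^8\cdot 4^{2n})=O(n^8\cdot 16^n)=O^*(16^n)$, completing the proof.
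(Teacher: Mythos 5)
Your proposal is correct and matches the paper's own proof in both halves: the lower bound via the four corners of $[n]\times[n]$ plus arbitrary subsets of the $4n-8$ remaining lattice points on $\partial\conv(Z_0)$, giving $2^{4n-8}=\Omega(16^n)$, and the upper bound via the same $\{0,1,2,3\}$ line-labelling as for $\Fn(n)$, augmented by recording the at most $8$ extreme vertices on the bounding box (the only lines weak convexity allows to carry three or more vertices), yielding $O(n^8\cdot 16^n)=O^*(16^n)$. Nothing essential differs from the paper's argument.
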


\section{Conclusions}
\label{sec:con}

We studied combinatorial properties of convex polygons (resp., polytopes) in Cartesian products in $d$-space. Similar questions for point sets in general position or for lattice polygons (resp., polytopes) have been previously considered. We showed that every $n\times \ldots \times n$ Cartesian product in $\mathbb{R}^d$ contains $\Omega(\log^{d-1}n)$ points in convex position, and this bound is the best possible. Our upper bound matches previous bounds~\cite{KV03,Valtr92} for points in general position, which are conjectured to be tight. Our lower bound, however, does not yield any improvement for points in general position. In contrast, an $n\times \ldots \times n$ section of the integer lattice $\mathbb{Z}^d$ contains significantly more, namely $\Theta(n^{d(d-1)/(d+1)})$, points in convex position~\cite{And63}.

The \emph{maximum number} of convex polygons in an $n\times n$ Cartesian product is $F(n)=\Theta^*(16^n)$. This bound is tight up to polynomial factors, and is significantly larger than the corresponding bound in an $n\times n$ section of the integer lattice~\cite{barany1992number1}. In contrast, $n^2$ points in convex (hence general) position trivially determine $2^{n^2}-1$ convex polygons. Erd\H{o}s~\cite{erdos1978} proved that the \emph{minimum number} of convex polygons determined by $n$ points in general position is $\exp(\Theta(\log^2 n))$. Determining (or estimating) the minimum number of convex polygons in an $n\times n$ Cartesian product and in higher dimensions remain as open problems.

Our motivating problem was the reconstruction of a convex polygon from the $x$- and $y$-projections of its vertices. We presented a $\frac12$-approximation for computing the maximal size of a convex polygon supported by a grid $X\times Y$. Finding an efficient algorithm for the original problem, or proving its hardness, remains open. As our dynamic program does not directly extend to $d \geq 3$, approximation algorithms in higher dimensions are also of interest.

\paragraph{Acknowledgments.} We are grateful to the anonymous referees for their careful reading of the paper that helped clarify several subtle details in the inductive proof in Section~\ref{sec:bounds}.


\bibliography{gridgons}

\end{document}